\newtheorem{proposition}{Proposition}
\newtheorem{theorem}{Theorem}
\newtheorem{lemma}{Lemma}
\newtheorem{corollary}{Corollary}
\theoremstyle{definition}
\newtheorem{definition}{Definition}
\newcommand{\frecc}{\longrightarrow} 
\newcommand{\de}{{\rm d}} 
\newcommand{\lft}{\left(} 
\newcommand{\rgt}{\right)}
\newcommand{\real}{\mathbb R} 
\newcommand{\torus}{\mathbb{T}} 
\newcommand{\complex}{{\mathbb C}} 
\newcommand{\nat}{\mathbb N} 
\newcommand{\hotimes}{\hat{\otimes}} 
\renewcommand{\aa}{\mathcal{A}} 
\newcommand{\pp}{\mathcal{P}}
\newcommand{\hh}{\mathcal{H}} 
\newcommand{\kk}{\mathcal{K}} 
\newcommand{\vv}{\mathcal{V}} 
\newcommand{\xx}{\mathcal{X}} 
\newcommand{\Z}{\mathbbm{Z}}
\newcommand{\cc}{\mathcal{C}} 
\newcommand{\ttt}{\mathcal{T}}
\renewcommand{\lll}{\mathcal{L}}
\newcommand{\trh}{\mathcal{T}(\mathcal{H})} 
\newcommand{\trhzero}{\mathcal{T}_0(\mathcal{H})} 
\newcommand{\lh}{\mathcal{L(H)}} 
\newcommand{\elle}[1]{\mathcal{L} ( #1 )} 
\newcommand{\ti}[1]{\mathcal{T} \left( #1 \right)} 
\newcommand{\spanno}[1]{{\rm span}\, \left\{ #1 \right\}} 
\newcommand{\rad}[1]{{\rm rad}\, #1} 
\newcommand{\eps}{\varepsilon}
\newcommand{\scal}[2]{\left\langle #1 , #2 \right\rangle} 
\newcommand{\norm}[1]{\left\|#1\right\|} 
\newcommand{\dom}{{\rm dom}\,} 
\newcommand{\tr}[1]{{\rm tr} \left[ #1 \right]} 
\newcommand{\trap}[2]{{\rm tr}^{#1} \left[ #2 \right]} 
\newcommand{\trped}[2]{{\rm tr}_{#1} \left[ #2 \right]} 
\newcommand{\ii}{\mathcal{I}} 
\newcommand{\jj}{\mathcal{J}} 
\newcommand{\memo}{\mathfrak{M}} 
\newcommand{\mm}{\mathcal{M}} 
\newcommand{\bor}[1]{\mathcal{B}\lft{#1}\rgt} 
\newcommand{\boro}{\bor{\Omega}} 
\newcommand{\borel}[1]{{\mathcal B}\lft \real\rgt}
\newcommand{\sss}{\mathcal{S}} 
\newcommand{\ldueG}{L^2 (G)}
\newcommand{\ldueGV}{L^2 (G; \mathcal{V})}
\newcommand{\ldueGVhh}{L^2 (G; \mathcal{V} \otimes \mathcal{H})}
\begin{document}
\setlength\arraycolsep{2pt}
\title[Covariant quantum instruments]{Covariant quantum instruments}

\begin{abstract}
The structure of covariant instruments is
studied and a general structure theorem is derived. A detailed characterization is given to covariant instruments in the case of an irreducible representation of a locally compact group.
\end{abstract}

\author[Carmeli]{Claudio Carmeli}
\address{Claudio Carmeli, Dipartimento di Fisica, Universit\`a di Genova, and I.N.F.N., Sezione di Genova, Via
Dodecaneso 33, 16146 Genova, Italy}
\email{carmeli@ge.infn.it}

\author[Heinosaari]{Teiko Heinosaari}
\address{Teiko Heinosaari, Research Center for Quantum Information, Slovak Academy of Sciences, Bratislava, Slovakia and Department of Physics and Astronomy, University of Turku, Finland}
\email{teiko.heinosaari@utu.fi}

\author[Toigo]{Alessandro Toigo}
\address{Alessandro Toigo, Dipartimento di Informatica, Universit\`a di Genova, and I.N.F.N., Sezione di Genova, Via
Dodecaneso 35, 16146 Genova, Italy}
\email{toigo@ge.infn.it}

\date{\the\day/\the\month/\the\year}

\maketitle{}

\section{Introduction}\label{sec:introduzione}

An instrument captures neatly the mathematical description of a quantum measurement. For each input state, the instrument gives both the measurement outcome probabilities and the conditional output states. The concept of an instrument was introduce by Davies and Lewis in \cite{DaLe70} and it has become a standard tool in quantum information theory \cite{Werner01},\cite{Keyl02} and in studies of various aspects of quantum measurements \cite{QTOS76},\cite{PSAQT82},\cite{Ozawa84},\cite{QTM96}. 

In this work we investigate the mathematical structure of covariant instruments. Covariance of an instrument means that there is a group having both a continuous unitary representation and a continuous action on outcome space, and that the instrument transforms in a consistent way under these operations. The covariance property is typical for instruments arising from physical applications. 

Covariant instruments were first studied by Davies in \cite{Davies70}, where he characterized their structure in the case of a compact group having a finite dimensional unitary representation. In \cite{Holevo98} Holevo investigated the structure of covariant instruments in the situation of a locally compact Abelian group. In this work we focus on the case of an irreducible representation of a locally compact group.

Our investigation proceeds in the following way. In Section \ref{sec:definizioni} we fix the notation and recall the definition of a covariant instrument. In Section \ref{sec:struttura} we derive a general structure theorem for covariant instruments. This theorem shows that a covariant instrument is determined by a system of imprimitivity and an intertwining operator. Section \ref{sec:irriducibile} concentrates on the case of an irreducible representation and a transitive action with a compact stability subgroup. For this kind of situation we derive a characterization of all covariant instruments. In Section \ref{sec:projective} these results are generalized to cover the case of a projective unitary representation since this is the general framework in quantum mechanics. Finally, in Section \ref{sec:cp} we give an alternative formulation for the characterizations obtained in Sections \ref{sec:irriducibile} and \ref{sec:projective}.

\section{Basic definitions}\label{sec:definizioni}

If $\xx_1$ and $\xx_2$ are Banach spaces, we denote by $\elle{\xx_1 ; \xx_2}$ the Banach space of the bounded operators from $\xx_1$ to $\xx_2$, with the uniform norm $\norm{\cdot}_\infty$. We also use abbreviate notation $\elle{\xx ; \xx} = \elle{\xx}$.

Let $\hh$ be a complex separable Hilbert space. We denote by $\norm{\cdot}_\hh$ its norm and $\scal{\cdot}{\cdot}_\hh$ its scalar product, assumed linear in the first entry. (When no confusion will arise, the subscripts are dropped.) Let $\lh$ and $\trh$ be the Banach spaces of bounded operators and trace class operators on $\hh$, respectively. We denote by $\norm{\cdot}_\lll$ the operator norm on $\lh$ and $\norm{\cdot}_{\ttt}$ the trace class norm on $\trh$. For each $u,v\in\hh$, we denote by $u\odot v$ the rank one operator on $\hh$ defined as
\begin{equation*}
(u\odot v) (w) = \scal{w}{v} u \quad \forall w\in\hh.
\end{equation*}

Let $\Omega$ be a locally compact topological space, which is Hausdorff and satisfies the second axiom of countability (lcsc space, in short). We let $\bor{\Omega}$ denote the Borel $\sigma$-algebra of $\Omega$.

An instrument has several equivalent definitions. Often an instrument is defined as a $\sigma$-additive map $\ii$ from $\boro$ to the set $\elle{\trh}$ of bounded linear maps on $\trh$. It is then required that $\ii(X)$ is a completely positive map for each $X\in\boro$, and that $\ii$ satisfies the normalization condition $\tr{\ii(\Omega)T}=\tr{T}$ for each $T\in\trh$.  In our current investigation it is more convenient to use a slightly different but equivalent definition for instruments. For this purpose, let $\mm(\Omega ; \trh)$ be the ordered Banach space of $\trh$-valued Borel measures on $\Omega$, with norm $\norm{M}_\mm = |M| (\Omega)$, $|M|$ being the total variation of $M$; see, for instance, \cite{RFA93}. An instrument can now be seen as a map from $\trh$ to  $\mm(\Omega ; \trh)$. In the following we state this alternative definition explicitly.

\begin{definition}\label{def:str}
An \emph{instrument} based on $\Omega$ is a linear map $\ii : \trh \frecc \mm(\Omega ; \trh)$ such that
\begin{itemize}
\item[({\rm i})] for each $X \in \bor{\Omega}$, the linear map
\begin{equation*}
\ii_X : \trh \frecc \trh,\quad T\mapsto \ii_X (T):= (\ii T)(X)
\end{equation*}
is completely positive;
\item[({\rm ii})] for every $T\in\trh$,
\begin{equation*}
\tr{(\ii T)(\Omega)} = \tr{T}.
\end{equation*}
\end{itemize}
\end{definition}

We recall (as we will need these notations later) that the complete positivity of a map $\ii_X$ means the following. If $n\in\nat$, let $\hh^{(n)}$ be the direct sum of $n$ copies of $\hh$. We identify $\hh^{(n)}$ with the column vectors having $n$ entries in $\hh$. In this way, each trace class operator $\tilde{T} \in \ti{\hh^{(n)}}$ is identified with an $n \times n$ matrix with entries in $\trh$. Let $\ii^{(n)}_X (\tilde{T})$ be the element of $\ti{\hh^{(n)}}$ defined as $[\ii^{(n)}_X (\tilde{T})]_{ij} = \ii_X ([\tilde{T}]_{ij})$. To require complete positivity of $\ii_X$ is to say that for each $n$, the mapping $\ii^{(n)}_X : \ti{\hh^{(n)}} \frecc \ti{\hh^{(n)}}$ is positive.

\begin{proposition}\label{prop:bounded}
An instrument $\ii$ is a bounded map and $\norm{\ii}_{\infty} \leq 2$.
\end{proposition}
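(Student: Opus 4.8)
The plan is to bound the total-variation norm $\norm{\ii T}_\mm = |\ii T|(\Omega)$ in terms of $\norm{T}_\ttt$, using the normalization condition (ii) together with the complete positivity from (i). The factor of $2$ strongly suggests the standard trick of first handling positive (self-adjoint) trace-class operators and then splitting a general $T$ into a difference of two positive parts, each of whose norms is controlled.

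First I would observe that for a positive operator $T \in \trh$, each $\ii_X(T)$ is a positive operator (complete positivity implies positivity at the level $n=1$), so the measure $X \mapsto \tr{\ii_X(T)}$ is a positive measure. For positive measures the total variation is just the value on the whole space, and since the map $X \mapsto \ii_X(T)$ is positive-operator-valued, I expect $|\ii T|(\Omega) = \tr{(\ii T)(\Omega)}$. By condition (ii) this equals $\tr{T} = \norm{T}_\ttt$. Hence $\norm{\ii T}_\mm = \norm{T}_\ttt$ for positive $T$. The key point to make precise here is that the total variation of the $\trh$-valued measure $\ii T$ coincides with the total variation of the scalar positive measure $X \mapsto \tr{(\ii T)(X)}$; this holds because each $(\ii T)(X)$ is a positive operator, so $\norm{(\ii T)(X)}_\ttt = \tr{(\ii T)(X)}$, and positivity gives the needed additivity without absolute values.

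Next I would pass to a general self-adjoint $T$ by writing $T = T_+ - T_-$ with $T_\pm \geq 0$ and $\norm{T}_\ttt = \norm{T_+}_\ttt + \norm{T_-}_\ttt$ (the canonical Jordan decomposition of a self-adjoint trace-class operator). By linearity $\ii T = \ii T_+ - \ii T_-$, so by the triangle inequality in $\mm(\Omega;\trh)$ and the positive case,
\begin{equation*}
\norm{\ii T}_\mm \leq \norm{\ii T_+}_\mm + \norm{\ii T_-}_\mm = \norm{T_+}_\ttt + \norm{T_-}_\ttt = \norm{T}_\ttt.
\end{equation*}
Finally, for an arbitrary $T \in \trh$ I would split into real and imaginary parts $T = A + iB$ with $A, B$ self-adjoint, apply the self-adjoint bound to each, and use $\norm{A}_\ttt, \norm{B}_\ttt \leq \norm{T}_\ttt$ together with the triangle inequality to conclude $\norm{\ii T}_\mm \leq 2\norm{T}_\ttt$, giving $\norm{\ii}_\infty \leq 2$.

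**The main obstacle** I anticipate is the very first step: justifying rigorously that the total variation of the vector measure $\ii T$ (for positive $T$) is computed simply as $\tr{(\ii T)(\Omega)}$ rather than involving a genuine supremum over partitions. This requires confirming that positivity of the operator values $(\ii T)(X)$ forces the norm $\norm{\cdot}_\ttt$ to behave additively under the $\sigma$-additive decompositions appearing in the definition of $|M|(\Omega)$, so that the supremum over partitions is attained trivially. Once this identification is secured, the remaining decomposition steps are routine applications of linearity and the triangle inequality.
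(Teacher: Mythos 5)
Your proposal is correct and follows essentially the same route as the paper: both rest on the observation that positivity of $\ii_X$ makes $\ii T$ a positive operator valued measure when $T\geq 0$, so that $\norm{\ii T}_\mm = \tr{(\ii T)(\Omega)} = \tr{T}$ by condition (ii), and both then obtain the factor $2$ from the decomposition $T = T^1_+ - T^1_- + i(T^2_+ - T^2_-)$ with $\norm{T^i_+}_\ttt + \norm{T^i_-}_\ttt \leq \norm{T}_\ttt$ (your two-step split into Cartesian and then Jordan parts is exactly this decomposition written out). The ``obstacle'' you flag --- that the total variation of a positive-operator-valued measure reduces to the trace of its value on $\Omega$ --- is handled correctly by your additivity argument and is used implicitly in the paper as well.
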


\begin{proof}
By condition ({\rm i}) of Definition \ref{def:str}, each map $\ii_X$ is, in particular, positive. This implies that $\ii$ is a positive map. Let $T\in\trh$. We can decompose $T$ into a sum $T = T^1_+ - T^1_- + i (T^2_+ - T^2_-)$, where $T^i_{\pm} \geq 0$ and $\norm{T^i_+}_{\ttt} + \norm{T^i_-}_{\ttt} \leq \norm{T}_{\ttt}$. This decomposition and condition ({\rm ii}) of  Definition \ref{def:str} imply that
\begin{eqnarray*}
\norm{\ii T}_{\mm} &\leq& \sum_{i = 1,2,\, j = +,-} \norm{\ii T^i_j}_{\mm} = \sum_{i = 1,2,\, j = +,-} \tr{(\ii T^i_j)(\Omega)} \\
& = & \sum_{i = 1,2,\, j = +,-} \tr{T^i_j}
\leq 2 \norm{T}_{\ttt}.
\end{eqnarray*}
\end{proof}

Let $G$ be a lcsc~topological group having a strongly continuous unitary representation $g\mapsto U(g)\equiv U_g$ on $\hh$, and acting continuously on $\Omega$. The latter requirement means that there exists a continuous mapping $G\times \Omega \ni (g,\omega)\mapsto g\cdot\omega \in \Omega$ such that
\begin{itemize}
\item the mapping $\omega\mapsto g\cdot\omega$ is a homeomorphism of $\Omega$ for each $g\in G$;
\item $g_1\cdot (g_2\cdot\omega)=(g_1g_2)\cdot\omega$ for every $g_1,g_2\in G$ and $\omega\in\Omega$ .
\end{itemize}
If $g\in G$ and $X\subseteq\Omega$, we denote $g\cdot X=\{g\cdot x\mid x\in X\}$.

\begin{definition}\label{def:cov}
An instrument $\ii$ is \emph{covariant with respect to $U$}, or shortly \emph{$U$-covariant}, if
\begin{equation}\label{covar}
\ii_{g\cdot X}(T) = U_g\ii_X ( U^\ast_g T U_g )U_g^\ast   \quad \forall X\in \bor{\Omega},\, g\in G,\, T\in \trh.
\end{equation}
\end{definition}

\section{General structure theorem}\label{sec:struttura}

In his seminal article \cite{Ozawa84} Ozawa presented a fundamental structure theorem for instruments. Theorem \ref{teo:struttura} and Corollary \ref{cor:structure} are generalizations of Ozawa's result taking into account the covariance property of an instrument\footnote{The action of $G$ is not required to be transitive. Therefore, any instrument is covariant if $G$ is chosen to be the trivial group of one element. In this way Corollary 5.2 of \cite{Ozawa84} is contained in Corollary \ref{cor:structure}.}. 
A similar result has been stated in \cite{Holevo98} and proved in \cite{Denisov90} in a slightly weaker form (i.e. under the hypothesis that $G$ acts transitively on $\Omega$ and without proving separability of the auxiliary Hilbert space $\mathcal{K}$).
Theorem \ref{teo:struttura} will play a crucial role in our investigation in Section \ref{sec:irriducibile} so we find it useful to give a detailed proof here. 

To formulate Theorem \ref{teo:struttura} and its proof we need to fix the following notation. Let $\kk$ be a Hilbert space. We denote by $\kk\otimes\hh$ the Hilbert space tensor product of $\kk$ and $\hh$. The partial trace over $\kk$ is the linear map ${\rm tr}^\kk : \ti{\kk \otimes \hh} \frecc \trh$ defined by the condition
\begin{equation*}
\tr{A\,\trap{\kk}{T}} = \tr{\lft I\otimes A \rgt T} \quad \forall T\in\ti{\kk\otimes\hh},\, A\in\lh,
\end{equation*}
where the trace on the left-hand side is over $\hh$ and on the right-hand side over $\kk\otimes\hh$.

\begin{theorem}\label{teo:struttura}
Let $\ii$ be a $U$-covariant instrument. Then there exist
\begin{itemize}
\item a separable Hilbert space $\kk$, a strongly continuous unitary representation $D$ of $G$ in $\kk$ and a projection valued measure $P:\bor{\Omega} \frecc \elle{\kk}$ satisfying
\begin{equation}\label{eq:struttura:P}
P (g\cdot X) = D_g P(X) D^\ast_g \quad \forall X\in\bor{\Omega},\, g\in G;
\end{equation}
\item an isometry $W:\hh \frecc \kk\otimes\hh$ satisfying
\begin{equation}\label{eq:struttura:W}
W U_g = \lft D_g\otimes U_g \rgt W \quad \forall g\in G ,
\end{equation}
\end{itemize}
such that
\begin{equation}\label{eq:struttura:I}
(\ii T)(X) = \trap{\kk}{\lft P(X) \otimes I \rgt W T W^\ast}.
\end{equation}
Moreover, $\kk$, $D$, $P$ and $W$ can be chosen in such a way that the set 
\begin{equation}\label{eq:struttura:set}
\left\{ (P(X)\otimes A) Wv \mid X\in \bor{\Omega},\, A\in \lh,\, v\in\hh \right\}
\end{equation}
is total in $\kk \otimes \hh$. This requirement makes the imprimitivity system $(D,P,\kk)$ unique up to an isomorphism, i.e., if $\kk^\prime$, $D^\prime$, $P^\prime$, $W^\prime$ are respectively as $\kk$, $D$, $P$, $W$ , then there exists a unitary map $V:\kk\frecc\kk^\prime$ such that $VD = D^\prime V$ and $VP = P^\prime V$.
\end{theorem}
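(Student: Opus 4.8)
The plan is to run a Stinespring/GNS-type dilation whose data are tailored to reproduce the totality set \eqref{eq:struttura:set}. The guiding computation is to evaluate the would-be inner product $\scal{(P(X)\otimes A)Wv}{(P(Y)\otimes B)Ww}$ assuming \eqref{eq:struttura:I} and the PVM property $P(X)P(Y)=P(X\cap Y)$; a short manipulation using $\tr{M(a\odot b)}=\scal{Ma}{b}$ rewrites it purely in terms of $\ii$. This motivates introducing the free vector space generated by symbols $e(X,A,v)$, indexed by $(X,A,v)\in\bor{\Omega}\times\lh\times\hh$, equipped with the sesquilinear kernel
\[
K\big(e(X,A,v),\,e(Y,B,w)\big)=\tr{B^\ast A\,\ii_{X\cap Y}(v\odot w)} .
\]
Everything downstream is forced once this kernel is shown to be positive: the Hilbert space $\kk\otimes\hh$, the PVM, the isometry and the representation will all be read off from the completion.

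The first real step is positivity of $K$, and this is exactly where complete positivity of $\ii$ (in the matrix form of Definition \ref{def:str}) enters. Given a finite combination $\xi=\sum_k c_k\,e(X_k,A_k,v_k)$, I would pass to the finite algebra generated by $X_1,\dots,X_n$ and split each intersection into its disjoint Borel atoms $Y_j$, using finite additivity of $\ii$ so that $K(\xi,\xi)=\sum_j\sum_{k,l}c_k\bar c_l\,\tr{A_l^\ast A_k\,\ii_{Y_j}(v_k\odot v_l)}$. For fixed $j$ the vector $\eta=(c_1v_1,\dots,c_nv_n)^{\mathrm T}\in\hh^{(n)}$ gives $\eta\odot\eta\ge 0$ with blocks $c_k\bar c_l\,(v_k\odot v_l)$, so $\ii^{(n)}_{Y_j}(\eta\odot\eta)\ge 0$; compressing by the row operator $\Phi=(A_1,\dots,A_n):\hh^{(n)}\frecc\hh$ produces $\Phi\,\ii^{(n)}_{Y_j}(\eta\odot\eta)\,\Phi^\ast\in\trh$, a positive trace-class operator whose trace is precisely the $j$-th summand. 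Hence each term, and so $K(\xi,\xi)$, is nonnegative. Quotienting by the null space and completing yields a Hilbert space $\tilde\hh$ with vectors $[X,A,v]$.

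The crux — and the step I expect to be the main obstacle — is recovering the \emph{tensor} structure $\tilde\hh\cong\kk\otimes\hh$ with the factorised form $P(X)\otimes I$ demanded by the statement. I would define $\pi(A')[X,A,v]=[X,A'A,v]$ and check it is a unital $\ast$-homomorphism of $\lh$; the key point is that $\pi$ is \emph{normal}, which I would obtain by noting that for finite $\xi$ the vector functional $A'\mapsto\scal{\pi(A')\xi}{\xi}$ equals $\tr{A'S}$ for a fixed $S\in\trh$, hence is normal, and that normal functionals are norm-closed while $\xi\mapsto\omega_\xi$ is norm-continuous, so normality propagates to all of $\tilde\hh$. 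By the structure theory of normal representations of $\lh$ there is a Hilbert space $\kk$ and a unitary identification $\tilde\hh\cong\kk\otimes\hh$ carrying $\pi(A')$ to $I_\kk\otimes A'$; separability of $\kk$ follows because a countable generating algebra for $\bor{\Omega}$, a countable dense set in $\hh$, and a countable strongly dense set in the unit ball of $\lh$ (using normality, hence strong continuity of $\pi$ on bounded sets) together make the $[X,A,v]$ separable. This amplification argument is what rigidly produces the required tensor product, and it is the least formal part of the construction.

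With the identification in hand the remaining objects are defined on generators and verified routinely. Set $Wv=[\Omega,I,v]$, an isometry by the trace-preservation condition (ii) of Definition \ref{def:str}; define $P(X)[Y,A,v]=[X\cap Y,A,v]$, which is a PVM and commutes with $\pi(\lh)=I_\kk\otimes\lh$, hence lies in the commutant $\elle\kk\otimes I_\hh$ and factors as $P(X)\otimes I$; and define $\tilde D_g[X,A,v]=[g\cdot X,\,U_gAU_g^\ast,\,U_gv]$, which preserves $K$ precisely by the covariance relation \eqref{covar} of Definition \ref{def:cov}, so it is unitary and multiplicative in $g$. Since $\tilde D_g\,\pi(A')\,\tilde D_g^{-1}=\pi(U_gA'U_g^\ast)$, the operator $\tilde D_g(I_\kk\otimes U_g^\ast)$ commutes with $I_\kk\otimes\lh$ and therefore equals $D_g\otimes I$ for a unitary $D_g$ on $\kk$, i.e. $\tilde D_g=D_g\otimes U_g$; this yields \eqref{eq:struttura:W}, while $\tilde D_gP(X)\tilde D_g^{-1}=P(g\cdot X)$ yields \eqref{eq:struttura:P}. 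Strong continuity of $D$ I would deduce from weak measurability of $g\mapsto\tilde D_g$ together with the standard fact that a weakly measurable unitary representation of an lcsc group is strongly continuous. Formula \eqref{eq:struttura:I} is checked on $T=v\odot w$ by testing against $A\in\lh$ and using $(P(X)\otimes A)Wv=[X,A,v]$, and the totality of \eqref{eq:struttura:set} holds by construction. Finally, for uniqueness, any second quadruple $(\kk',D',P',W')$ satisfying \eqref{eq:struttura:I} and the totality condition gives the \emph{same} kernel $K$ on its generators; the map $(P(X)\otimes A)Wv\mapsto(P'(X)\otimes A)W'v$ is then a well-defined unitary $V$ intertwining $\pi$ with $\pi'$, $P$ with $P'$ and $\tilde D$ with $\tilde D'$, and intertwining of the amplifications forces $V=V_0\otimes I_\hh$, whence $V_0D=D'V_0$ and $V_0P=P'V_0$.
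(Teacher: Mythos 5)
Your construction is, in substance, the paper's own proof: your kernel $K$ is exactly the sesquilinear form the paper places on $\sss(\Omega)\hotimes\lh\hotimes\hh$, your positivity argument (disjointification plus compression by the row operator $(A_1,\dots,A_n)$ against the matrix form of complete positivity) is the paper's computation in step (A), and the passage through a normal representation $\pi$ of $\lh$ to the tensor factorization $\tilde\hh\cong\kk\otimes\hh$, with $P$ and the group operators then splitting off the first factor, is precisely the paper's step (E), resting on the same structure theorem for normal representations of $\lh$. The only structural deviation is cosmetic: you let $G$ act by $\tilde D_g=D_g\otimes U_g$ directly, whereas the paper's $\hat D_g$ corresponds to $D_g\otimes I$ (it twists the operator slot by $U_g^\ast$); the two are conjugate by $\pi(U_g)$ and lead to identical conclusions. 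Your normality, separability and uniqueness arguments are all sound and parallel the paper's, up to minor bookkeeping (in the positivity step the inner sum over $k,l$ must be restricted to indices with $Y_j\subset X_k\cap X_l$, i.e.\ replace $c_k$ by $0$ when $Y_j\not\subset X_k$).

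There is, however, one genuine gap: you assert weak measurability of $g\mapsto\tilde D_g$ and then invoke the automatic-continuity theorem, but that measurability is exactly the nontrivial point, and nothing you have set up delivers it. On generators,
\begin{equation*}
\scal{\tilde D_g [X,A,v]}{[Y,B,w]} = \tr{B^\ast\, U_g A U_g^\ast\, \ii_{(g\cdot X)\cap Y}(U_g v\odot w)},
\end{equation*}
and while $U_gAU_g^\ast$ and $U_gv$ depend continuously on $g$, the dependence through the \emph{set} $(g\cdot X)\cap Y$ sitting inside the instrument is the obstacle: an instrument is merely $\sigma$-additive, with no regularity whatsoever of $g\mapsto\ii_{(g\cdot X)\cap Y}(T)$ built in, so measurability in $g$ is not free. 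The paper spends the core of its step (D) on precisely this: it shows the set $E=\{(g,g\cdot x)\mid g\in G,\ x\in X\}$ is Borel in $G\times\Omega$ with sections $E^g=g\cdot X$, and then gets $\mu_G$-measurability of $g\mapsto\mu_{e_i\odot e_k;\,e_j\odot w}(E^g\cap Y)$ from the Fubini theorem applied to the complex measures $\mu_{A;T}=\tr{A\,\ii_{(\cdot)}(T)}$, before citing the measurable-implies-continuous theorem for lcsc groups. Without this (or an equivalent) argument, the strong continuity of $D$ — which is part of the statement being proved — remains unestablished; the rest of your proposal goes through as written.
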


\begin{proof}
For each $T\in \trh$ and $A\in\lh$, we denote by $\mu_{A;T}$ the complex Borel measure defined by
\begin{equation*}
\mu_{A;T}(X) := \tr{A \ii_X (T)} \quad \forall X\in\bor{\Omega}.
\end{equation*}

We divide the proof into steps (A)-(G). 
\begin{itemize}
\item[(A)] For each set $X\in \Omega$, denote by $\chi_X$ the characteristic function of $X$. Define $\sss (\Omega) := \spanno{\chi_X \mid X\in\bor{\Omega}}$, a subset of the space of the Borel functions on $\Omega$, and let $\hat{\hh}_0 := \sss (\Omega) \hotimes \lh \hotimes \hh$, where $\hotimes$ denotes algebraic tensor product.
The following map from $\left[ \sss (\Omega ) \times \lh \times \hh \right]^2$ into $\complex$
\begin{equation*}
(f_1,A_1,v_1 ; f_2,A_2,v_2) \mapsto \int f_1 (x) \overline{f}_2 (x) \de\mu_{A_2^\ast A_1 ; v_1 \odot v_2 } (x)
\end{equation*}
defines a sesquilinear form $\scal{\cdot}{\cdot}_0$ on $\hat{\hh}_0$. This form is positive semidefinite. In fact, if $\phi\in \hat{\hh}_0$, there exist disjoint sets $X_1 , X_2\ldots X_n$ in $\bor{\Omega}$ and, for each $i=1,2\ldots n$, elements $A^{(i)}_1 , A^{(i)}_2 \ldots A^{(i)}_m \in \lh$ and $v^{(i)}_1 , v^{(i)}_2 \ldots v^{(i)}_m \in \hh$ such that
\begin{equation}\label{decomp. di fi in k0}
\phi = \sum_{i=1}^n \sum_{j=1}^m \chi_{X_i} \hotimes A^{(i)}_j \hotimes v^{(i)}_j.
\end{equation}
Let $\tilde{A}^{(i)}$ be the matrix in $\elle{\hh^{(m)}}$ with entries $\tilde{A}^{(i)}_{hk} = \delta_{1h} A^{(i)}_k$, and let $\tilde{v}^{(i)}$ be the vector in $\hh^{(m)}$ with $\tilde{v}^{(i)}_h = v^{(i)}_h$. We have
\begin{eqnarray*}
\scal{\phi}{\phi}_0 & = & \sum_{i=1}^n \sum_{j,j^\prime = 1}^m \trped{\hh}{A_j^{(i)} \ii_{X_i} (v_j^{(i)} \odot v_{j^\prime}^{(i)}) A_{j^\prime}^{(i) \ast}} \\
& = & \sum_{i=1}^n \sum_{j,j^\prime = 1}^m\sum_{h = 1}^m \trped{\hh}{\tilde{A}_{hj}^{(i)} \ii_{X_i}^{(m)} (\tilde{v}^{(i)} \odot \tilde{v}^{(i)})_{j j^\prime} \tilde{A}_{h j^\prime}^{(i) \ast}} \\
& = & \sum_{i=1}^n \sum_{h = 1}^m \trped{\hh}{ ( \tilde{A}^{(i)} \ii_{X_i}^{(m)} (\tilde{v}^{(i)} \odot \tilde{v}^{(i)}) \tilde{A}^{(i) \ast} )_{hh} } \\
& = & \sum_{i=1}^n \trped{\hh^{(m)}}{ \tilde{A}^{(i)} \ii_{X_i}^{(m)} (\tilde{v}^{(i)} \odot \tilde{v}^{(i)}) \tilde{A}^{(i) \ast} }
\end{eqnarray*}
and therefore $\scal{\phi}{\phi}_0\geq 0$ by the positivity of $\ii^{(m)}_{X_i}$. Hence, denoting by $\rad{\scal{\cdot}{\cdot}_0}$ the kernel of the map $\phi\mapsto\scal{\phi}{\phi}_0$, then the quotient space $\hat{\hh}_0 / \rad{\scal{\cdot}{\cdot}_0}$ is a scalar product space in the usual way. We denote by $\hat{\hh}$ the Hilbert space obtained by completing this quotient space.

\item[(B)] We show that the Hilbert space $\hat{\hh}$ constructed in (A) is separable. Since $\hh$ is separable, there exist a sequence $\{ v_n \}_{n\in\nat}$ which is dense in $\hh$ and a sequence $\{ A_n \}_{n\in\nat}$ which is dense in $\lh$ with the ultra-strong (i.e. $\sigma$-strong) operator topology. Moreover, since $\Omega$ is second countable, there exists a sequence $\{ X_n \}_{n\in\nat}$ in $\bor{\Omega}$ with the following property: if $\mu$ is a positive measure on $\Omega$ and $X\in\bor{\Omega}$, for every $\eps > 0$ there exists $n$ such that $\mu (X \Delta X_n) < \eps$ (here $\Delta$ denotes the symmetric difference between two sets; the claim follows from Theorem C Sect.~5 and Theorem D Sect.~13 in \cite{MT50}). In the following we show that the set $\spanno{\chi_{X_k} \hotimes A_j \hotimes v_i \mid k,j,i\in\nat }$ is dense in $\hat{\hh}_0$. By the density of $\hat{\hh}_0 / \rad{\scal{\cdot}{\cdot}_0}$ in $\hat{\hh}$, the separability of $\hat{\hh}$ then follows. 

Let $\eps>0$ and $\phi = \chi_{X} \hotimes A \hotimes v$. Choose $i$ such that 
$$
\tr{A^\ast A \ii_X ((v - v_i) \odot (v - v_i))} < \eps^2 /9,
$$
then choose $j$ such that
$$
\tr{(A - A_j)^\ast (A - A_j) \ii_X (v_i \odot v_i)} < \eps^2 /9
$$
and finally $k$ such that 
$$
\mu_{A_j^\ast A_j ; v_i \odot v_i} (X \Delta X_k) < \eps^2 /9.
$$
We then have
\begin{eqnarray*}
&&\norm{\phi - \chi_{X_k} \hotimes A_j \hotimes v_i}_{\hat{\hh}} \leq \norm{\chi_{X} \hotimes A \hotimes (v - v_i)}_{\hat{\hh}} \\
&&\qquad \qquad + \norm{\chi_{X} \hotimes (A - A_j) \hotimes v_i}_{\hat{\hh}} + \norm{\chi_{X\Delta X_k} \hotimes A_j \hotimes v_i}_{\hat{\hh}} < \eps
\end{eqnarray*}
and our claim is therefore proven.

\item[(C)] We now construct a projection valued measure on $\hat{\hh}$.  For each $X\in\bor{\Omega}$, we define the operator $\hat{P}(X)$ on $\hat{\hh}_0$ by formula
\begin{equation*}
\hat{P} (X) \lft f \hotimes A \hotimes v \rgt = \chi_{X} f\hotimes A \hotimes v .
\end{equation*}
If $\phi = \sum_{i = 1}^{n} \chi_{X_i} \hotimes A_i \hotimes v_i$ is an element in $\hat{\hh}_0$, we have
\begin{eqnarray*}
\scal{\hat{P} (X)\phi}{\hat{P} (X)\phi}_0 &=& \sum_{i,j=1}^n \tr{A^\ast_j A_i \ii_{X_i \cap X_j \cap X} (v_i \odot v_j)} \\
& = & \scal{\hat{P} (X)\phi}{\phi}_0 \leq \scal{\hat{P} (X)\phi}{\hat{P} (X)\phi}_0^{1/2} \scal{\phi}{\phi}_0^{1/2}
\end{eqnarray*}
by Cauchy-Schwartz inequality.
From this we deduce that $\hat{P} (X) \rad{\scal{\cdot}{\cdot}_0} \subset \rad{\scal{\cdot}{\cdot}_0}$, so that $\hat{P} (X)$ descends to the quotient space $\hat{\hh}_0 / \rad{\scal{\cdot}{\cdot}_0}$. Moreover, the previous calculation shows that $\hat{P} (X)$ extends to a bounded selfadjoint operator on $\hat{\hh}$. Clearly, $\hat{P} (X)^2 = \hat{P} (X)$.

We show that the mapping $X\mapsto \hat{P} (X)$ from $\bor{\Omega}$ into $\elle{\hat{\hh}}$ is weakly $\sigma$-additive. Since the range of $\hat{P}$ in $\elle{\hat{\hh}}$ is norm bounded and the set $\hat{\hh}_0 / \rad{\scal{\cdot}{\cdot}_0}$ is dense in $\hat{\hh}$, it suffices to show that $\scal{\hat{P} (\cup_k X_k) \phi}{\phi}_0 = \sum_k \scal{\hat{P} (X_k) \phi}{\phi}_0 $ for all $\phi\in\hat{\hh}_0$ and for all disjoint sequences $\{ X_k \}_{k\in\nat}$ in $\bor{\Omega}$. If $\phi$ is as before, we have
\begin{equation*}
\scal{\hat{P} (X)\phi}{\phi}_0 = \sum_{i,j=1}^n \mu_{A^\ast_j A_i ; v_i \odot v_j} (X_i \cap X_j \cap X),
\end{equation*}
and the claim follows from $\sigma$-additivity of $\mu_{A^\ast_j A_i ; v_i \odot v_j}$.

\item[(D)] In the following we construct a unitary representation of $G$ which forms an imprimitivity system with $\hat{P}$. For each $g\in G$, we introduce in $\hat{\hh}_0$ the linear operator $\hat{D}_g$ whose action on decomposable elements is
\begin{equation*}
\hat{D}_g \lft f \hotimes A \hotimes v \rgt = g\cdot f \hotimes A U_g^{\ast} \hotimes U_g v,
\end{equation*}
where $g\cdot f (x) = f(g^{-1} \cdot x)$. The $U$-covariance of $\ii$ then implies that
\begin{equation*}
\scal{\hat{D}_g \lft f \hotimes A \hotimes v \rgt}{\hat{D}_g \lft f \hotimes A \hotimes v \rgt}_0 = \scal{f \hotimes A \hotimes v}{f \hotimes A \hotimes v}_0.
\end{equation*}
We conclude that $\hat{D}_g$ defines an isometric operator in $\hat{\hh}$.

Since $\hat{D}_{g_1 g_2} = \hat{D}_{g_1} \hat{D}_{g_2}$, we see that $\hat{D}$ is a group homomorphism of $G$ into the unitary group of $\hat{\hh}$. It is a straightforward consequence of the definitions of $\hat{P}$ and $\hat{D}$ that
\begin{equation}\label{covar. nella dim.}
\hat{D}_g \hat{P} (X) \hat{D}_g^{\ast} = \hat{P}(g\cdot X) \quad \forall X\in\bor{\Omega},\, g\in G.
\end{equation}

We now show that $\hat{D}$ is weakly (hence strongly) continuous. By 22.20, item (b) in \cite{AHAI63}, it suffices to show that the map $g\mapsto \scal{\hat{D}_g \phi}{\phi^\prime}_{\hat{\hh}}$ is $\mu_G$-measurable for all $\phi , \phi^\prime\in\hat{\hh}$, where we denoted by $\mu_G$ the left-invariant Haar measure on $G$. By density of  $\hat{\hh}_0 / \rad{\scal{\cdot}{\cdot}_0}$ in $\hat{\hh}$, it is enough to show $\mu_G$-measurability of the maps $g\mapsto \scal{\hat{D}_g \phi}{\phi^\prime}_0$ for $\phi = \chi_X \hotimes A \hotimes v$ and $\phi^\prime = \chi_{X^\prime} \hotimes A^\prime \hotimes v^\prime$. We have
\begin{equation*}
\scal{\hat{D}_g \phi}{\phi^\prime}_0 =  \tr{A^{\prime\ast} A U_g^\ast \ii_{g\cdot X \cap X^\prime} (U_g v\odot v^\prime)},
\end{equation*}
and, if $\{ e_i \}_{i\in\nat}$ is a Hilbert basis of $\hh$, this equation can be written in the form 
\begin{equation*}
\scal{\hat{D}_g \phi}{\phi^\prime}_0 = \sum_i \sum_j \sum_k \scal{U_g v}{e_j} \scal{A^{\prime\ast} A U_g^\ast e_k}{e_i} \scal{\ii_{g\cdot X \cap X^\prime} (e_j \odot v^\prime) e_i}{e_k}.
\end{equation*}
Since the maps $g\mapsto \scal{U_g v}{e_j}$ and $g\mapsto \scal{A^{\prime\ast} A U_g^\ast e_k}{e_i}$ are Borel (actually continuous), it suffices to show that $g\mapsto \scal{\ii_{g\cdot X \cap X^\prime} (e_j \odot v^\prime) e_i}{e_k}$ is $\mu_G$-measurable. The set $E := \{ (g,g\cdot x) \mid g\in G,x\in X  \}$ is a Borel subset of $G\times\Omega$. Moreover, denoting by $E^g$ the section of $E$ at $g$, we have $E^g = g\cdot X$. Thus,
\begin{equation*}
\scal{\ii_{g\cdot X \cap X^\prime} (e_j \odot v^\prime) e_i}{e_k} =
\mu_{e_i \odot e_k ; e_j \odot v^\prime} (E^g \cap X^\prime),
\end{equation*}
and, since the map $g\mapsto \mu_{e_i \odot e_k ; e_j \odot v^\prime} (E^g \cap X^\prime)$ is $\mu_G$-measurable by Fubini theorem, the claim follows.

\item[(E)] For each $B\in\lh$, let $\pi (B) : \hat{\hh}_0 \frecc \hat{\hh}_0$ be the linear operator extending the following action on decomposable vectors:
\begin{equation*}
\pi (B) \lft f \hotimes A \hotimes v \rgt = f \hotimes B A \hotimes v.
\end{equation*}
If $\phi\in\hat{\hh}_0$ is written as in eq.~(\ref{decomp. di fi in k0}), then we get
\begin{equation*}
\scal{\pi (B) \phi}{\pi (B) \phi}_0 = \sum_{i=1}^n \trped{\hh^{(m)}}{\tilde{A}^{(i) \ast} \tilde{B}^\ast \tilde{B} \tilde{A}^{(i)} \ii_{X_i}^{(m)} (\tilde{v}^{(i)} \odot \tilde{v}^{(i)})},
\end{equation*}
where $\tilde{A}^{(i)}$ and $\tilde{v}^{(i)}$ are defined as in the step (A) of the proof, and $\tilde{B}$ is the matrix in $\elle{\hh^{(m)}}$ with $\tilde{B}_{hk} = \delta_{hk} B$. Since the operator $\ii_{X_i}^{(m)} (\tilde{v}^{(i)} \odot \tilde{v}^{(i)})$ is positive and $\tilde{A}^{(i) \ast} \tilde{B}^\ast \tilde{B} \tilde{A}^{(i)} \leq \norm{B}^2_\lll \tilde{A}^{(i) \ast} \tilde{A}^{(i)}$, we get
\begin{eqnarray*}
\scal{\pi (B) \phi}{\pi (B) \phi}_0
& \leq & \norm{B}^2_\lll \sum_{i=1}^n \trped{\hh^{(m)}}{\tilde{A}^{(i) \ast} \tilde{A}^{(i)} \ii_{X_i}^{(m)} (\tilde{v}^{(i)} \odot \tilde{v}^{(i)})} \\
& = & \norm{B}^2_\lll \scal{\phi}{\phi}_0
\end{eqnarray*}
This shows that $\pi (B)$ descends to the quotient space $\hat{\hh}_0 / \rad{\scal{\cdot}{\cdot}_0}$, and extends to a bounded operator in $\hat{\hh}$.

It is a straightforward consequence of the definitions of $\pi,\hat{P}$ and $\hat{D}$ that  $\pi (A) \hat{P} (X) = \hat{P} (X) \pi (A)$ and $\pi (A) \hat{D}_g = \hat{D}_g \pi (A)$ for all $A,X$ and $g$.

It is easy to check that $\pi$ is a $\ast$-homomorphism of $\lh$ in $\hat{\hh}$. We claim that it is normal. In fact, if $B_n \downarrow O$ in $\lh$, then
\begin{equation*}
\scal{\pi (B_n) \lft \chi_{X_1} \hotimes A_1 \hotimes v_1 \rgt}{\chi_{X_2} \hotimes A_2 \hotimes v_2}_0 = \tr{A_2^\ast B_n A_1 \ii_{X_1 \cap X_2} (v_1 \odot v_2)} \end{equation*}
and the right-hand side goes to $0$ since $B_n \to 0$ in the ultra-weak topology. Since the sequence $\pi (B_n)$ is norm bounded (because $\pi$ is norm decreasing) and $\hat{\hh}_0 / \rad{\scal{\cdot}{\cdot}_0}$ is dense in $\hat{\hh}$, this suffices to show normality of $\pi$.

Since $\pi$ is a normal $\ast$-homomorphism of $\lh$, there exists a Hilbert space $\kk$ such that $\hat{\hh} = \kk \otimes \hh$ and $\pi(A) = I \otimes A$ for all $A\in\lh$; see Lemma 9.2.2 in \cite{QTOS76}. The separability of $\kk$ follows directly from the separability of $\hat{\hh}$. Since $\pi$ commutes with $\hat{P}$ and $\hat{D}$, there exist a projection valued measure $P : \bor{\Omega} \frecc \elle{\kk}$ and a strongly continuous unitary representation $D : G \frecc \elle{\kk}$ such that $\hat{P} = P \otimes I$ and $\hat{D} = D \otimes I$. Equation~(\ref{covar. nella dim.}) then implies that condition (\ref{eq:struttura:P}) holds. 

\item[(F)] We define the following operator $W : \hh \frecc \hat{\hh}_0$
\begin{equation*}
Wv = 1 \hotimes I \hotimes v \quad \forall v \in \hh.
\end{equation*}
We have $\scal{Wv}{Wv}_0 = \tr{\ii_\Omega (v\odot v)} = \norm{v}^2$, so $W$ descends to an isometry $W : \hh \frecc \hat{\hh}$. Clearly, $W U_g v = \hat{D}_g \pi(U_g) Wv$, so that condition (\ref{eq:struttura:W}) holds.

For all $A\in \lh$, we get
\begin{eqnarray*}
&& \trped{\hh}{A\, \trap{\kk}{(P (X) \otimes I) W (u \odot v) W^\ast}} =
\trped{\hat{\hh}}{(P (X) \otimes A) W (u \odot v) W^\ast} \\
&&\qquad \qquad =\scal{\hat{P} (X) \pi (A) Wu}{Wv}_{\hat{\hh}} = \trped{\hh}{A \ii_X (u\odot v)},
\end{eqnarray*}
and hence $\ii_X (u\odot v) = \trap{\kk}{(P (X) \otimes I) W (u \odot v) W^\ast}$. By the continuity of $\ii_X$ formula (\ref{eq:struttura:I}) holds for every $T\in\trh$. 

The set $\left\{ \hat{P} (X) \pi(A) Wv \mid X\in\bor{\Omega},\, A\in\lh,\, v\in\hh \right\}$ spans $\hat{\hh}_0$ and hence the set expressed in (\ref{eq:struttura:set}) is total in $\kk\otimes\hh$.

\item[(G)] Finally, we prove the last claim of Theorem \ref{teo:struttura}. Suppose $\kk^\prime$, $P^\prime$, $D^\prime$, $W^\prime$ are as stated in the theorem. Let $\hat{V} : \hat{\hh}_0 \frecc \kk^\prime \otimes \hh$ be the linear operator whose action on decomposable elements is
\begin{equation*}
\hat{V} \lft f \hotimes A \hotimes v \rgt = \lft P^\prime (f) \otimes A \rgt W^\prime v,
\end{equation*}
where we set $P^\prime (f) = \int f(x) \de P^\prime (x)$.

For an element $\phi = \sum_{i=1}^n \chi_{X_i} \hotimes A_i \hotimes v_i$ in $\hat{\hh}_0$, we have
\begin{eqnarray*}
\scal{\hat{V}\phi}{\hat{V}\phi}_0 & = & \sum_{i,j=1}^n \scal{\lft P^\prime (X_i) \otimes A_i \rgt W^\prime v_i}{\lft P^\prime (X_j) \otimes A_j \rgt W^\prime v_j}_{\kk^\prime \otimes \hh} \\
& = & \sum_{i,j=1}^n \trped{\kk^\prime \otimes \hh}{\lft P^\prime (X_i \cap X_j) \otimes A_j^\ast A_i \rgt W^\prime (v_i \odot v_j) W^{\prime \ast}} \\
& = & \sum_{i,j=1}^n \trped{\hh}{A_j^\ast A_i \ii_{X_i \cap X_j} (v_i \odot v_j)} = \scal{\phi}{\phi}_0.
\end{eqnarray*}
Hence, $\hat{V}$ descends to an isometry from $\hat{\hh}$ to $\kk^\prime \otimes \hh$. Since its image is dense in $\kk^\prime \otimes \hh$, $\hat{V}$ is actually unitary.

We have
\begin{equation}\label{uno}
\hat{V} \pi (B) ( f \hotimes A \hotimes v ) =
( P^\prime (f) \otimes BA ) W^\prime v = ( I \otimes B ) \hat{V} ( f \hotimes A \hotimes v ),
\end{equation}
and
\begin{equation}\label{due}
\hat{V} \hat{P} (X) ( f \hotimes A \hotimes v ) =
( P^\prime (\chi_X f) \otimes A ) W^\prime v = ( P^\prime (X) \otimes I ) \hat{V} ( f \hotimes A \hotimes v ).
\end{equation}
Moreover
\begin{eqnarray}
\hat{V} \hat{D}_g ( f \hotimes A \hotimes v ) & = & \hat{V} ( g\cdot f \hotimes A U_g^\ast \hotimes U_g v ) = \lft P^\prime (g\cdot f) \otimes A U_g^\ast \rgt W^\prime U_g v \notag \\
& = & \lft D^\prime_g P^\prime (f) D^{\prime\ast}_g \otimes A U_g^\ast \rgt \lft D^\prime_g \otimes U_g \rgt W^\prime v \notag \\
& = & \lft D^\prime_g P^\prime (f) \otimes A \rgt W^\prime v = \lft D^\prime_g \otimes I \rgt \hat{V} ( f \hotimes A \hotimes v ). \label{tre}
\end{eqnarray}
From eq.~(\ref{uno}) it follows that $\hat{V} (I \otimes B) = (I \otimes B) \hat{V}$ for all $B\in\lh$, hence $\hat{V} = V \otimes I$ for some unitary operator $V : \kk\frecc\kk^\prime$. From eq.~(\ref{due}) we get $\hat{V} (P(X) \otimes I) = (P^\prime (X) \otimes I) \hat{V}$, from which it follows $V P(X) = P^\prime (X) V$ for all Borel sets $X$. Finally, $\hat{V} (D_g \otimes I) = (D^\prime_g \otimes I) \hat{V}$ by eq.~(\ref{tre}), so that $V D_g = D^\prime_g V$ for all $g$.
\end{itemize}
\end{proof}

Theorem \ref{teo:struttura} can be written in an alternative form which has a more direct physical interpretation. We recall that a \emph{measurement model} $\memo$ is a 4-tuple $<\hh_\aa,Z,\xi,V>$ where 
\begin{itemize}
\item $\hh_\aa$ is a Hilbert space associated to a measurement apparatus $\aa$;  
\item $Z:\boro\to\mathcal{L}(\hh_\aa)$ is a projection valued measure (\emph{pointer observable});
\item $T_\xi$ is a one-dimensional projection corresponding to a unit vector $\xi\in\hh_\aa$ (\emph{initial state of $\aa$});
\item $V$ is a unitary operator on $\hh_\aa\otimes\hh$ (\emph{measurement coupling}).
\end{itemize}
The measurement model $\memo$ determines an instrument $\ii^{\memo}$ through the formula
\begin{equation*}
\ii^{\memo}_X \left( T\right)  =  \trap{\hh_{\aa}}{V\left(T_\xi\otimes T\right) V^\ast (Z(X)\otimes I)},\quad X\in\bor{\Omega},\, T\in\trh.
\end{equation*}

A measurement model formalizes the idea that the system is made to interact with a measurement apparatus and then a pointer observable of the apparatus is measured. This is done in order to achieve some information about the system or to prepare it in some way. The corresponding instrument gives the total description of the measurement procedure from the point of view of the system. 

Ozawa proved in \cite{Ozawa84} that for each instrument $\ii$ there is a measurement model $\memo$ such that $\ii=\ii^\memo$. In other words, all instruments arise from measurement models. The following corollary of Theorem \ref{teo:struttura} is a covariant generalization of this result and the proof follows the proof given by Ozawa. 

\begin{corollary}\label{cor:structure}
Let $\ii$ be a $U$-covariant instrument. Then there are a measurement model $\memo = <\hh_\aa,Z,\xi,V>$ and a strongly continuous unitary representation $g \mapsto R_g$ of $G$ on $\mathcal{H_A}$ such that $\ii=\ii^\memo$ and the pointer observable $Z$ satisfies the covariance condition
\begin{equation*}
R_gZ(X)R_g^\ast=Z(g\cdot X)\quad \forall X\in\bor{\Omega},\, g\in G.
\end{equation*}
\end{corollary}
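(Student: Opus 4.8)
The plan is to feed the output of Theorem~\ref{teo:struttura} into Ozawa's dilation construction \cite{Ozawa84}, carrying the representation $D$ along as the sought covariant representation $R$. Applying Theorem~\ref{teo:struttura} yields a separable Hilbert space $\kk$, a strongly continuous unitary representation $D$ of $G$ on $\kk$, a projection valued measure $P$ satisfying $P(g\cdot X)=D_g P(X)D_g^\ast$, and an isometry $W:\hh\frecc\kk\otimes\hh$ obeying \eqref{eq:struttura:W} and \eqref{eq:struttura:I}. Before fixing the model I would enlarge the ancilla, replacing $\kk$ by $\kk\otimes\kk_0$ for a fixed separable infinite-dimensional $\kk_0$, and accordingly $D$ by $D\otimes I$, $P$ by $P\otimes I$, and (after reordering the tensor factors) $W$ by the isometry $v\mapsto(Wv)\otimes\eta$ for a fixed unit vector $\eta\in\kk_0$. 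A direct check shows these replacements still satisfy \eqref{eq:struttura:P}, \eqref{eq:struttura:W} and \eqref{eq:struttura:I}, and that the new image lies in the proper slice $\kk\otimes\complex\eta\otimes\hh$, so its orthogonal complement in $\kk\otimes\hh$ is infinite-dimensional; I may thus assume from the start that $(\kk\otimes\hh)\ominus W\hh$ is infinite-dimensional. I then set $\hh_\aa:=\kk$, take as pointer observable $Z:=P$ and as representation $R_g:=D_g$. With these choices the covariance $R_gZ(X)R_g^\ast=Z(g\cdot X)$ is \emph{exactly} \eqref{eq:struttura:P}, and the strong continuity of $R$ is part of the theorem, so all that is left is to produce an initial vector $\xi$ and a unitary measurement coupling $V$.

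The heart of the argument is the passage from the isometry $W$ to the unitary $V$, which is where I expect the only genuine difficulty. Fix a unit vector $\xi\in\hh_\aa$ and consider the assignment $\xi\otimes v\mapsto Wv$. Since both $v\mapsto\xi\otimes v$ and $W$ are isometric, this defines a partial isometry with initial space $\{\xi\}\otimes\hh$ and final space $W\hh$, and it extends to a unitary $V$ on $\hh_\aa\otimes\hh$ precisely when the orthogonal complements of these two subspaces have equal Hilbert-space dimension. By the enlargement above, $(\hh_\aa\otimes\hh)\ominus W\hh$ is infinite-dimensional, and the complement $(\hh_\aa\otimes\hh)\ominus(\{\xi\}\otimes\hh)$ is infinite-dimensional as well, since $\hh_\aa$ is now infinite-dimensional; both being separable, they are isomorphic and the desired unitary $V$ exists. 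Note that the corollary requires covariance only of $Z$, not of $V$, so no intertwining condition is imposed on $V$ and this freedom is exactly what makes the dilation unobstructed once the dimension bookkeeping is arranged.

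It then remains to verify $\ii=\ii^\memo$ for the model $\memo=<\hh_\aa,Z,\xi,V>$ with $T_\xi=\xi\odot\xi$, and this is a routine computation. On a rank-one operator $T=u\odot v$ one has $T_\xi\otimes T=(\xi\otimes u)\odot(\xi\otimes v)$, whence $V(T_\xi\otimes T)V^\ast=(Wu)\odot(Wv)=W(u\odot v)W^\ast$ by the defining property of $V$; inserting this into the definition of $\ii^\memo$ and using the elementary identity $\trap{\kk}{S\,(P(X)\otimes I)}=\trap{\kk}{(P(X)\otimes I)\,S}$ (immediate from the defining property of the partial trace together with cyclicity of the trace) gives $\ii^\memo_X(u\odot v)=\trap{\kk}{(P(X)\otimes I)W(u\odot v)W^\ast}=\ii_X(u\odot v)$ by \eqref{eq:struttura:I}. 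Extending by linearity to finite-rank operators and then by continuity, both $\ii_X$ and $\ii^\memo_X$ being bounded, yields $\ii=\ii^\memo$ on all of $\trh$. Thus the main obstacle is genuinely the isometry-to-unitary dilation and the bookkeeping needed to keep the pointer covariance \eqref{eq:struttura:P} intact under the enlargement of the ancilla; once $V$ is in hand, covariance of $Z$ and continuity of $R$ are inherited verbatim from Theorem~\ref{teo:struttura}.
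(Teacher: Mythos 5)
Your proposal is correct and follows essentially the same route as the paper's proof: both feed Theorem \ref{teo:struttura} into an isometry-to-unitary dilation, take the pointer observable $Z$ and the representation $R$ to be amplifications of $P$ and $D$ (so covariance and strong continuity are inherited verbatim), and obtain the coupling $V$ by extending $\xi\otimes v\mapsto Wv$ to a unitary after enlarging the ancilla. The only difference is the enlargement device — the paper takes $\hh_\aa=\kk\otimes\hh\otimes\kk$ with $\xi=\xi^{\prime\prime}\otimes\xi^\prime$, which equalizes the deficiency dimensions even when everything is finite dimensional, whereas you tensor with a fresh infinite-dimensional $\kk_0$ to make both orthocomplements infinite dimensional; either way the two complements are isomorphic and the dilation goes through.
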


\begin{proof}
With the notations of Theorem \ref{teo:struttura}, we denote $\hh_\aa=\kk\otimes\hh\otimes\kk$, $Z = I \otimes I \otimes P$, and $R = I\otimes I \otimes D$. We fix unit vectors $\xi^\prime\in\kk, \xi^{\prime\prime}\in\kk\otimes\hh$ and denote by $[\xi^\prime], [\xi^{\prime\prime}]$ the one-dimensional subspaces they generate. Then we define a mapping $V^\prime$ from $[\xi^{\prime\prime}] \otimes [\xi^\prime] \otimes \hh$ into $\kk\otimes\hh\otimes\kk\otimes \hh$ by
\begin{equation*}
V^\prime(\xi^{\prime\prime} \otimes \xi^\prime \otimes \psi) = \xi^{\prime\prime} \otimes W\psi .
\end{equation*}
The mapping $V^\prime$ is an isometry and it has a unitary extension $V$ on $\kk\otimes\hh\otimes\kk\otimes \hh$. Choosing $\xi=\xi^{\prime\prime}\otimes \xi^\prime$ we get a measurement model with the required properties.
\end{proof}

\section{The case of an irreducible representation}\label{sec:irriducibile}

In this section we make the following assumptions:
\begin{itemize}
\item $U$ is an irreducible representation of $G$;
\item $\Omega$ is the quotient space $G/H$, where $H$ is a compact subgroup of $G$. 
\end{itemize} 

We denote the left $H$-coset of $g\in G$ by $\dot{g}$. Let $\mu_G$ be a left invariant Haar measure on $G$ and let $\Delta$ denote the modular function of $G$. As the subgroup $H$ is compact, it has a Haar measure $\mu_H$ with $\mu_H (H) = 1$. Finally, $\mu_\Omega$ is the $G$-invariant measure on $\Omega$ satisfying
\begin{equation*}
\int_G f(g) \de\mu_G (g) = \int_\Omega \de\mu_\Omega (\dot{g}) \int_H f(gh) \de\mu_H (h)
\end{equation*}
for all compactly supported continuous functions $f$ on $G$.
 
We recall that the representation $U$ is called \emph{square integrable} if there exists a nonzero vector $v\in\hh$ such that the map $g\mapsto \scal{v}{U_g v}$ is in $\ldueG$. We denote by $L$ the left regular representation of $G$ acting in $\ldueG$. We will need the following result of Duflo and Moore \cite{DuMo76}.

\begin{theorem}\label{teo:square}
The representation $U$ is square integrable if and only if it is a subrepresentation of the left regular representation. In this case, there exists a unique selfadjoint
injective positive operator $C$ with $U$-invariant domain such that the following conditions hold:
\begin{enumerate}
\item for all $g\in G$,
\begin{equation*}
U_g C = \Delta ( g )^{-1/2} C U_g ;
\end{equation*}
\item for all $u\in \hh$ and $v\in\dom C$ \label{1 di Teo. quadrato integrabile}
\begin{equation*}
\int_G \Delta(g)^{-1} \left| \scal{Cv}{U_g u} \right|^2 \de\mu_G (g) = \norm{v}^2 \norm{u}^2 ;
\end{equation*}
\item if $W : \hh\frecc \ldueG$ is a bounded map intertwining $U$ with $L$, then there exists a unique $u\in\hh$ such that \label{2 di Teo. quadrato integrabile}
\begin{equation*}
Wv (g) = \Delta(g)^{-1/2} \scal{Cv}{U_g u} \quad \forall v\in\dom C.
\end{equation*}
\end{enumerate}
\end{theorem}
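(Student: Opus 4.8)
The plan is to realize $U$ concretely inside $\ldueG$ through its matrix coefficients and to read off $C$ from the orthogonality relations these coefficients satisfy, with Schur's lemma for the irreducible $U$ supplying all the uniqueness. For $w\in\hh$ introduce the coefficient map $V_w v(g)=\scal{v}{U_g w}$, defined on the set $\dom V_w$ of those $v$ for which this function lies in $\ldueG$. Since $\scal{U_{g_0}v}{U_g w}=\scal{v}{U_{g_0^{-1}g}w}$, one checks at once that $V_w U_{g_0}=L_{g_0}V_w$, so $V_w$ intertwines $U$ with the left regular representation $L$ wherever it is defined; it is moreover closed, since $L^2$-convergence of $V_w v_n$ together with the pointwise convergence of $\scal{v_n}{U_g w}$ forces the limit to be $V_w v$. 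Square integrability says precisely that some $V_w$ has nonzero domain, and as $\dom V_w$ is carried into itself by the unitaries $L_{g_0}$ it is $U$-invariant, hence dense by irreducibility.

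Statement (1) now follows by reading ``(equivalent to) a subrepresentation of $L$'' as ``there is a nonzero bounded intertwiner $\hh\frecc\ldueG$''. The nontrivial implication starts from square integrability: the closed intertwiner $V_w$ has a polar decomposition $V_w=\Theta\lvert V_w\rvert$ whose partial isometry $\Theta$ is a bounded intertwiner (the positive factor $\lvert V_w\rvert$ commutes with $U$), embedding $\hh$ isometrically as an $L$-invariant subspace. The converse uses the orthogonality relations below, which exhibit genuinely square-integrable coefficients inside any such embedded copy of $\hh$.

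The core of the argument, and the point where non-unimodularity enters, is the construction of $C$. For analyzing vectors $w,w^\prime$ that are \emph{admissible} (i.e.\ $V_w$, $V_{w^\prime}$ bounded), the operator $V_{w^\prime}^\ast V_w$ commutes with $U$ and hence, by Schur, equals a scalar $\gamma(w,w^\prime)I$; in particular $\norm{V_w v}^2=\gamma(w,w)\norm{v}^2$ for all $v$. Computing how $\gamma$ changes when the analyzing vector is translated, the right-translation behaviour of $\mu_G$ produces a modular factor, $\gamma(U_{g_0}w,U_{g_0}w)=\Delta(g_0)^{-1}\gamma(w,w)$; this form is therefore \emph{not} $U$-invariant, which is exactly what permits $C$ to be a genuinely unbounded, non-scalar operator (in the unimodular case $\gamma$ is invariant and collapses to the scalar formal degree). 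I would then define $C$ as the positive selfadjoint operator with $\gamma(w,w)=\norm{C^{-1}w}^2$, so that the admissible vectors are precisely $\ran C=\dom C^{-1}$. Setting $w=Cv$ and applying the inversion $g\mapsto g^{-1}$, which introduces the weight $\Delta(g)^{-1}$, turns $\norm{V_{Cv}u}^2=\norm{v}^2\norm{u}^2$ into the relation of item (2), namely $\int_G\Delta(g)^{-1}\lvert\scal{Cv}{U_g u}\rvert^2\de\mu_G(g)=\norm{v}^2\norm{u}^2$. The transformation law of $\gamma$ reads $U_{g_0}^\ast C^{-2}U_{g_0}=\Delta(g_0)^{-1}C^{-2}$, which on taking the positive square root and inverting gives exactly $U_gC=\Delta(g)^{-1/2}CU_g$ and shows $\dom C$ is $U$-invariant; injectivity of $C$ and density of its range follow because $\ker C^{-2}$ and $\overline{\ran C^{-2}}$ are $U$-invariant, hence trivial or everything, the degenerate cases being excluded by square integrability.

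For the uniqueness in (2) and for item (3) I would pass to the Hilbert space $\jj$ of bounded intertwiners $\hh\frecc\ldueG$, with inner product defined by $T^\ast S=\scal{S}{T}_{\jj}I$ (a scalar by Schur). Any other positive operator calibrating the same orthogonality relation differs from $C$ by a positive operator obeying the same modular intertwining law, hence by a scalar, which the normalization $\norm{v}^2\norm{u}^2$ pins to $1$. The assignment $u\mapsto W_u$, $W_u v(g)=\Delta(g)^{-1/2}\scal{Cv}{U_g u}$, is linear, well defined, and isometric because $\norm{W_u v}=\norm{u}\,\norm{v}$; since $\jj$ has dimension $\dim\hh$ (the multiplicity of a square-integrable irreducible in $L$), this isometry of $\hh$ into $\jj$ is onto, which is item (3), with $u$ unique by injectivity. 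The main obstacle throughout is the unboundedness of $C$ and the attendant domain bookkeeping: one must verify that the admissible vectors form a $U$-invariant core, that the Schur reduction $V_{w^\prime}^\ast V_w=\gamma(w,w^\prime)I$ and its transformation law define a densely defined positive form represented by a bona fide unbounded operator, and that passing to square roots and inverses preserves every intertwining relation. Carrying out the Schur-lemma steps for unbounded forms rather than bounded operators, together with the careful handling of $\Delta$, is the delicate part and is precisely what separates the non-unimodular Duflo--Moore theorem from its unimodular predecessor.
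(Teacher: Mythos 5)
You should note first that the paper contains no proof of this statement at all: it is quoted as the Duflo--Moore theorem and attributed to \cite{DuMo76}, so your attempt can only be measured against the literature, not against an internal argument of the paper. On its own terms, your strategy is the standard modern route: coefficient operators $V_w v(g)=\scal{v}{U_g w}$, their closedness and intertwining property, the unbounded Schur lemma giving $V_{w^\prime}^\ast V_w=\gamma(w,w^\prime)I$, the modular law $\gamma(U_{g_0}w,U_{g_0}w)=\Delta(g_0)^{-1}\gamma(w,w)$, and the representation of the form $\gamma$ by $C^{-2}$. Your derivations of items (1) and (2) from these ingredients (conjugating $C^{-2}$ by $U_{g_0}$ and taking square roots; the substitution $g\mapsto g^{-1}$ producing the weight $\Delta(g)^{-1}$) are correct. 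The first genuine gap is the existence of $C$ itself: to write $\gamma(w,w)=\norm{C^{-1}w}^2$ you must prove that the positive form $\gamma$ is \emph{closed} on the admissible vectors; density and positivity are not enough, since a densely defined positive form need not even be closable (e.g.\ $f\mapsto |f(0)|^2$ on $L^2(\real)$). You flag this as ``the delicate part'' but never carry it out. It is reparable with tools you already introduced: if $(w_n)$ is $\gamma$-Cauchy and $w_n\to w$ in $\hh$, then $V_{w_n}$ is Cauchy in operator norm because $\norm{V_{w_n}-V_{w_m}}_\lll^2=\gamma(w_n-w_m,w_n-w_m)$; its limit $S$ is a bounded intertwiner, and a.e.\ pointwise identification along a subsequence gives $S=V_w$, so $w$ is admissible and $\gamma$ is closed.

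The second gap is more serious and your argument for it fails as stated: item (3). Surjectivity of the isometry $u\mapsto W_u$ from $\hh$ into the space $\jj$ of bounded intertwiners cannot be obtained by counting dimensions, for two reasons: an isometry between infinite-dimensional Hilbert spaces of equal Hilbert dimension need not be onto (the unilateral shift), and the assertion that the multiplicity of $U$ in $L$ equals $\dim\hh$ is, for square-integrable representations, essentially \emph{equivalent} to item (3), so invoking it is circular. The same circularity affects your converse of the first assertion: you justify ``subrepresentation of $L$ $\Rightarrow$ square integrable'' by appeal to ``the orthogonality relations below'', but those were derived under the hypothesis of square integrability. Both defects can be cured by one missing ingredient: if $W:\hh\frecc\ldueG$ is a nonzero (w.l.o.g.\ isometric) intertwiner and $\phi$ is a continuous compactly supported function, then $u_\phi:=W^\ast\phi$ satisfies $\scal{v}{U_{g_0}u_\phi}=\scal{Wv}{L_{g_0}\phi}=(Wv\ast\check{\phi})(g_0)$ with $\check{\phi}(x)=\overline{\phi(x^{-1})}$, and Young's inequality $\norm{Wv\ast\check{\phi}}_2\leq\norm{Wv}_2\norm{\check{\phi}}_1$ shows $u_\phi$ is admissible; choosing $\phi$ with $u_\phi\neq 0$ yields square integrability, and choosing $\phi$ an approximate identity shows $V_{u_\phi}^\ast W\neq 0$, which combined with the Schur--Riesz argument ($V_w^\ast W=\overline{\alpha(w)}\,I$, with $\alpha$ bounded for the norm $\norm{C^{-1}\cdot}$) produces the unique $u$ with $W=W_u$. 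Since item (3) is exactly the part of the theorem the paper uses (through Corollary \ref{cor:square:1}), this gap is not cosmetic.
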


The square $C^2$ of the operator $C$ is called the {\em formal degree} of $U$ with respect to the Haar measure $\mu_G$. If $G$ is unimodular, then $\dom C = \hh$ and $C$ is a scalar multiple of the identity operator of $\hh$.

Let $\vv$ be a separable Hilbert space. The tensor product $\ldueG \otimes \vv $ is identified with $\ldueGV$ in the usual way. We also use the canonical identification of the tensor product $\kk \otimes \hh^\ast$ with the Hilbert space of the Hilbert-Schmidt operators of $\elle{\hh ; \kk}$. We have the following consequence of Theorem \ref{teo:square}.

\begin{corollary}\label{cor:square:1}
Suppose there exists an isometry $W : \hh\frecc\ldueG \otimes \vv $ intertwining $U$ with $L\otimes I$. Then $U$ is square integrable. Moreover, if $C$ is as in Theorem \ref{teo:square}, then there exists $B\in\vv \otimes \hh^\ast$ with $\norm{B} = 1$ such that
\begin{equation*}
Wv (g) = \Delta (g)^{-1/2} B U_g^\ast C v \quad \forall v\in\dom C.
\end{equation*}
In particular, Wv is a continuous function in $\ldueGV$ for all $v\in\dom C$.
\end{corollary}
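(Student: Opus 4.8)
The plan is to reduce the statement to the Duflo--Moore theorem (Theorem~\ref{teo:square}) by splitting the isometry $W$ into its scalar components along an orthonormal basis of $\vv$. Fix an orthonormal basis $\{e_n\}_{n\in\nat}$ of $\vv$, which identifies $\ldueGV$ with the orthogonal direct sum $\bigoplus_n \ldueG$ and exhibits $L\otimes I$ as a direct sum of copies of $L$. Writing $W_n v = \scal{Wv(\cdot)}{e_n}_\vv$, each $W_n : \hh\frecc\ldueG$ is bounded; moreover, since the $n$-th summand is $(L\otimes I)$-invariant, the associated projection commutes with $L\otimes I$, and combining this with the intertwining relation for $W$ shows that $W_n$ intertwines $U$ with $L$.

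To obtain square integrability I would invoke Schur's lemma: for each $n$ the positive operator $W_n^\ast W_n$ commutes with $U$, hence equals $\lambda_n I$ with $\lambda_n\geq 0$ by irreducibility of $U$. Since $W$ is isometric, $\sum_n \lambda_n I = W^\ast W = I$, so $\sum_n \lambda_n = 1$ and at least one $\lambda_n$ is strictly positive. For such an $n$ the operator $\lambda_n^{-1/2} W_n$ is an isometry intertwining $U$ with $L$, so $U$ is unitarily equivalent to a subrepresentation of the left regular representation and is therefore square integrable by Theorem~\ref{teo:square}.

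Next I would produce the operator $B$. By property (3) of Theorem~\ref{teo:square} applied to each $W_n$, there is a unique $u_n\in\hh$ with $W_n v (g) = \Delta(g)^{-1/2}\scal{Cv}{U_g u_n}$ for all $v\in\dom C$. The key quantitative input is property (2): fixing a unit vector $v\in\dom C$ (the domain being dense), monotone convergence together with property (2) gives
\begin{equation*}
1 = \norm{Wv}^2 = \int_G \Delta(g)^{-1}\sum_n \left| \scal{Cv}{U_g u_n} \right|^2 \de\mu_G(g) = \sum_n \norm{u_n}^2 ,
\end{equation*}
the first equality using that $W$ is isometric. Hence $\sum_n \norm{u_n}^2 = 1$, so the operator $B\in\elle{\hh;\vv}$ defined by $Bw = \sum_n \scal{w}{u_n} e_n$ is Hilbert--Schmidt with $\norm{B} = 1$, i.e. $B\in\vv\otimes\hh^\ast$ of unit norm. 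Since $B U_g^\ast Cv = \sum_n \scal{Cv}{U_g u_n} e_n$, summing the componentwise formulas yields $Wv(g) = \Delta(g)^{-1/2} B U_g^\ast Cv$ for every $v\in\dom C$.

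Finally, continuity is immediate: $g\mapsto U_g^\ast$ is strongly continuous, $B$ is bounded and $\Delta$ is continuous, so $g\mapsto \Delta(g)^{-1/2} B U_g^\ast Cv$ is a norm-continuous $\vv$-valued function, which is the asserted continuous representative of $Wv$. The main obstacle I anticipate is the passage from ``subrepresentation of a multiple of $L$'' to ``subrepresentation of $L$'' itself, which is precisely what the Schur-lemma step achieves and what makes Theorem~\ref{teo:square} applicable; the other delicate point is the square-summability $\sum_n\norm{u_n}^2 = 1$, which is exactly what guarantees that $B$ lands in the Hilbert--Schmidt class $\vv\otimes\hh^\ast$ rather than merely in $\elle{\hh;\vv}$.
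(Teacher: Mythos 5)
Your proof is correct, and at its core it is the same argument as the paper's: you split $W$ into scalar components along an orthonormal basis of $\vv$ (the paper's maps $P_i W$), apply items (2) and (3) of Theorem \ref{teo:square} to each component to produce vectors $u_n$ with $\sum_n \norm{u_n}^2 = 1$, assemble $B = \sum_n e_n \odot u_n$ as a Hilbert--Schmidt operator of unit norm, and identify $Wv(g)$ with $\Delta(g)^{-1/2} B U_g^\ast C v$ by comparing the $L^2$ limit of the partial sums with the pointwise one. The one genuine difference is the order in which square integrability is obtained. The paper first invokes item (3) to get the vectors $u_i$ and item (2) to see that $\norm{u_i}^{-1} P_i W$ is an isometry intertwining $U$ with $L$, and only then concludes that $U$ is square integrable; read literally, this calls on the operator $C$ --- whose existence is part of the ``in this case'' clause of Theorem \ref{teo:square} --- before square integrability is known. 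Your route avoids this: you first observe that some component $W_n$ is nonzero, that $W_n^\ast W_n$ commutes with the irreducible representation $U$ and hence equals $\lambda_n I$ by Schur's lemma, and that $\lambda_n^{-1/2} W_n$ is then an intertwining isometry, so $U$ is a subrepresentation of $L$ and therefore square integrable; only afterwards do you use items (2) and (3). This extra Schur's-lemma step makes your ordering self-contained and is the natural repair of the paper's presentation; everything else --- the square-summability that places $B$ in $\vv \otimes \hh^\ast$ rather than merely in $\elle{\hh ; \vv}$, the almost-everywhere identification of $Wv$ with the function $g \mapsto \Delta(g)^{-1/2} B U_g^\ast C v$, and the final continuity statement --- matches the paper's proof.
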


\begin{proof}
Fix an orthonormal basis $\{ e_i \}_{i\in\nat}$ of $\vv$. Let $P_i : \ldueGV \frecc \ldueG$ and $Q_i : \ldueG \frecc \ldueGV$ be the following bounded maps
\begin{eqnarray*}
P_i f (g) & = & \scal{f(g)}{e_i} \quad \forall f\in\ldueGV \\
Q_i f (g) & = & f(g) e_i \quad \forall f\in\ldueG .
\end{eqnarray*}
Clearly, $Q_i$ is an isometry, $Q_i P_i$ is a projection operator in $\ldueGV$, $Q_i P_i Q_j P_j = 0$ if $i\neq j$, and $\sum_i Q_i P_i = I$ (in the strong operator topology). Moreover, $P_i W U_g = L_g P_i W$, hence, by item (\ref{2 di Teo. quadrato integrabile}) of Theorem \ref{teo:square}, there exists $u_i \in \hh$ such that
\begin{equation*}
( P_i W v ) (g) = \Delta(g)^{-1/2} \scal{Cv}{U_g u_i} \quad \forall v\in\dom C.
\end{equation*}
For any $v\in\dom C$, we have
\begin{eqnarray*}
\norm{v}^2 & = & \norm{W v}^2 = \sum\nolimits_{i} \norm{Q_i P_i W v}^2 = \sum\nolimits_{i} \norm{P_i W v}^2 \\
& = & \sum\nolimits_{i} \norm{v}^2 \norm{u_i}^2,
\end{eqnarray*}
the last equality following from item (\ref{1 di Teo. quadrato integrabile}) of Theorem \ref{teo:square}. Therefore, $\sum\nolimits_{i} \norm{u_i}^2 = 1$. If $u_i \neq 0$, then $\norm{u_i}^{-1} P_i W$ is an isometry intertwining $U$ with $L$, so $U$ is a subrepresentation of $L$. Thus, $U$ is square integrable by Theorem \ref{teo:square}. Moreover, the sum $\sum\nolimits_{i} e_i \odot u_i$ converges in $\vv \otimes \hh^\ast$ to an operator $B$ with $\norm{B} = 1$. If $v\in\dom C$, we have for all $g$
\begin{equation*}
\sum\nolimits_{i} ( Q_i P_i W v ) (g) = \sum\nolimits_{i} \Delta(g)^{-1/2} \scal{Cv}{U_g u_i} e_i =
\Delta(g)^{-1/2} BU_g^\ast C v.
\end{equation*}
Since $Wv = \sum\nolimits_{i} Q_i P_i W v$ (convergence in $\ldueGV$), by uniqueness of the limit
\begin{equation*}
Wv (g) = \Delta(g)^{-1/2} BU_g^\ast C v.
\end{equation*}
\end{proof}

We briefly recall some basic facts about induced representations and imprimitivity systems \cite{Mackey52}. Suppose $\sigma$ is a strongly continuous unitary representation of $H$ in $\vv$. For $f\in\ldueGV$, we define
\begin{equation*}
[\Pi f] (g) = \int_H \sigma_h f(gh) \de\mu_H (h) . 
\end{equation*}
Then $\Pi$ is a projection operator in $\ldueGV$ and it commutes with the operator $L\otimes I$. We denote by $\hh^\sigma$ the range of $\Pi$, and by $L^\sigma$ the restriction of $L \otimes I$ to $\hh^\sigma$. Observe that $\Pi f$ is a continuous function if $f$ is continuous.

For every $X\in\boro$ and $f\in\ldueGV$, we define
\begin{equation*}
[P (X) f ] (g) = \chi_X (\dot{g}) f(g). 
\end{equation*}
Then $P (X)$ is a projection operator in $\ldueGV$ commuting with $\Pi$, the map $P : \boro \frecc \elle{\ldueGV}$ is a projection valued measure, and
\begin{equation*}
L_g P (X) L_g^\ast = P (g\cdot X) \quad \forall X\in\boro,\, g\in G.
\end{equation*}
We denote by $P^\sigma$ the restriction of $P$ to $\hh^\sigma$. The triple $( L^\sigma, P^\sigma, \hh^\sigma)$ is the \emph{imprimitivity system induced by $\sigma$}.

\begin{corollary}\label{cor:square:2}
Suppose there exists an isometry $W : \hh\frecc\hh^\sigma$ intertwining $U$ with $L^\sigma$. Then $U$ is square integrable. Moreover, if $C$ is as in Theorem \ref{teo:square}, there exists $B\in\vv \otimes \hh^\ast$ with $\norm{B} = 1$ and $B U_h = \sigma_h B$ for all $h\in H$, such that
\begin{equation}\label{forma di W in accasigma}
Wv (g) = \Delta (g)^{-1/2} B U_g^\ast C v \quad \forall v\in\dom C .
\end{equation}
In particular, $Wv$ is a continuous function in $\hh^\sigma$ for all $v\in\dom C$.

Conversely, suppose $U$ is square integrable, and $B\in\vv \otimes \hh^\ast$ is such that $\norm{B} = 1$ and $B U_h = \sigma_h B$ for all $h\in H$. Then, for $v\in \dom C$, $Wv$ given by eq.~(\ref{forma di W in accasigma}) is a function in $\hh^\sigma$, and $W$ extends to an isometry from $\hh$ into $\hh^\sigma$ which intertwines $U$ with $L^\sigma$.
\end{corollary}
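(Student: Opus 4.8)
The plan is to obtain the first assertion as a direct consequence of Corollary~\ref{cor:square:1} and then to extract the extra intertwining $BU_h=\sigma_h B$ from the requirement that $W$ take values in $\hh^\sigma$; the converse will be a verification resting on the two identities of Theorem~\ref{teo:square}. I shall use repeatedly that, $H$ being compact, $\Delta|_H$ is a continuous homomorphism onto a compact subgroup of the positive reals and hence $\Delta(h)=1$ for all $h\in H$, and that $\dom C$ is a dense, $U$-invariant subspace on which $C$ is injective, so that $\{U_g^\ast Cv\mid v\in\dom C\}$ is total in $\hh$ for each fixed $g$. For the forward implication, note that $\hh^\sigma$ is a closed subspace of $\ldueGV$ and that $L^\sigma$ is, by definition, the restriction of $L\otimes I$ to it; thus the hypothesis makes $W$ an isometry $\hh\frecc\ldueGV$ intertwining $U$ with $L\otimes I$. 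Corollary~\ref{cor:square:1} then yields at once that $U$ is square integrable, produces $B\in\vv\otimes\hh^\ast$ with $\norm B=1$ for which eq.~(\ref{forma di W in accasigma}) holds, and gives continuity of $Wv$ on $\dom C$.

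What remains, and what I expect to be the crux, is the relation $BU_h=\sigma_h B$. The only hypothesis not yet used is $\ran W\subseteq\hh^\sigma=\ran\Pi$, i.e.\ $\Pi Wv=Wv$ for all $v$. Inserting eq.~(\ref{forma di W in accasigma}) into the definition of $\Pi$ and using $\Delta(h)=1$ gives
\[
[\Pi Wv](g)=\Delta(g)^{-1/2}\Big(\int_H\sigma_h B U_h^\ast\,\de\mu_H(h)\Big)U_g^\ast Cv .
\]
Comparing with $Wv(g)=\Delta(g)^{-1/2}BU_g^\ast Cv$ and cancelling the total family $\{U_g^\ast Cv\}$ yields the averaged identity $\int_H\sigma_h BU_h^\ast\,\de\mu_H(h)=B$. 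Each $\sigma_h BU_h^\ast$ has the same Hilbert--Schmidt norm as $B$, namely~$1$, and $\mu_H$ is a probability measure, so this realises the unit vector $B$ of the Hilbert space $\vv\otimes\hh^\ast$ as an average of unit vectors; strict convexity forces $\sigma_h BU_h^\ast=B$ for $\mu_H$-almost all $h$, and continuity of $h\mapsto\sigma_h BU_h^\ast$ upgrades this to every $h\in H$, that is $BU_h=\sigma_h B$. (Alternatively one first records that $\hh^\sigma$ is exactly the set of $f$ obeying the coset law $f(gh)=\sigma_h^\ast f(g)$, applies it to the continuous representative $Wv$, and reads off $BU_h^\ast=\sigma_h^\ast B$ after cancelling $\{U_g^\ast Cv\}$.)

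For the converse I define $Wv$ on the dense domain $\dom C$ by eq.~(\ref{forma di W in accasigma}) and verify the three properties. Choosing an orthonormal basis $\{e_i\}$ of $\vv$ and writing $B=\sum_i e_i\odot u_i$ with $\sum_i\norm{u_i}^2=\norm B^2=1$, one has $\norm{BU_g^\ast Cv}^2=\sum_i|\scal{Cv}{U_g u_i}|^2$, so by Tonelli and item~(\ref{1 di Teo. quadrato integrabile}) of Theorem~\ref{teo:square},
\[
\norm{Wv}^2=\sum_i\int_G\Delta(g)^{-1}|\scal{Cv}{U_g u_i}|^2\,\de\mu_G(g)=\sum_i\norm v^2\norm{u_i}^2=\norm v^2 ,
\]
so $W$ is isometric on $\dom C$ and extends to an isometry of $\hh$. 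Next, $BU_h=\sigma_h B$ reads $\sigma_h BU_h^\ast=B$, and feeding this together with $\Delta(h)=1$ and $\mu_H(H)=1$ into the displayed expression for $\Pi Wv$ gives $\Pi Wv=Wv$, whence $\ran W\subseteq\hh^\sigma$. Finally, rearranging the first identity of Theorem~\ref{teo:square} as $CU_{g'}=\Delta(g')^{1/2}U_{g'}C$, I compute $(WU_{g'}v)(g)=\Delta(g)^{-1/2}\Delta(g')^{1/2}BU_{g^{-1}g'}Cv$ and compare it with $((L_{g'}\otimes I)Wv)(g)=Wv(g'^{-1}g)=\Delta(g)^{-1/2}\Delta(g')^{1/2}BU_{g^{-1}g'}Cv$; the two coincide, so $WU_{g'}=(L_{g'}\otimes I)W=L^\sigma_{g'}W$, first on $\dom C$ and then, by continuity and density, on all of $\hh$.
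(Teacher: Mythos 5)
Your proof is correct and follows essentially the same route as the paper: the forward direction applies Corollary~\ref{cor:square:1} and extracts $BU_h=\sigma_h B$ from $\Pi W=W$ via the averaged identity $B=\int_H\sigma_h BU_h^\ast\,\de\mu_H(h)$, and the converse verifies isometry through the orthogonality relations of Theorem~\ref{teo:square} and membership in $\hh^\sigma$ through $\Pi W=W$. The only differences are cosmetic: you justify the passage from the averaged identity to the pointwise intertwining by strict convexity (the paper asserts the equivalence, which also follows from left invariance of $\mu_H$), and you write out the intertwining computation that the paper calls immediate.
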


\begin{proof}
Suppose $W : \hh\frecc\hh^\sigma$ intertwines $U$ with $L^\sigma$. By Corollary \ref{cor:square:1}, $U$ is square integrable and eq.~(\ref{forma di W in accasigma}) holds for some $B\in\vv \otimes \hh^\ast$ with $\norm{B} = 1$. If $v\in\dom C$, then $Wv$ is a continuous function, hence $\Pi Wv$ is also continuous, and we can evaluate $Wv$ and $\Pi Wv$ at the identity $e$ of $G$. $W = \Pi W$ then gives
\begin{equation*}
B C v = Wv (e) \equiv (\Pi Wv) (e) = \int_H \sigma_h B U_h^\ast C v \de\mu_H (h).
\end{equation*}
Since the range of $C$ is dense in $\hh$, this implies
\begin{equation*}
B = \int_H \sigma_h B U_h^\ast \de\mu_H (h)
\end{equation*}
(in the strong sense), which is equivalent to $B U_h = \sigma_h B$ for all $h\in H$.

Conversely, suppose $U$ is square integrable, and let $B$ be a norm $1$ element in $\vv \otimes \hh^\ast$ intertwining $U|_H$ with $\sigma$. For $v\in \dom C$, let $Wv$ be as in eq.~(\ref{forma di W in accasigma}). If $B^\ast B = \sum\nolimits_i \lambda_i h_i \odot h_i$ is the spectral decomposition of $B^\ast B$, with $\sum_i \lambda_i = 1$, we have
\begin{eqnarray*}
\int_G \norm{Wv (g)}^2 \de\mu_G (g)
& = & \int_G \Delta (g)^{-1} \scal{B^\ast B U^\ast_g C v}{U^\ast_g C v} \de\mu_G (g) \\
& = & \sum\nolimits_i \lambda_i \int_G \Delta (g)^{-1} \left| \scal{C v}{U_g h_i} \right|^2 \de\mu_G (g) = \norm{v}^2
\end{eqnarray*}
by Theorem \ref{teo:square}. This shows that $W$ extends to an isometry from $\hh$ into $\ldueGV$. Since $\Pi Wv = Wv$ for $v\in \dom C$, $W$ maps $\hh$ into $\hh^\sigma$. Finally, the intertwining property is immediate by eq.~(\ref{forma di W in accasigma}).
\end{proof}

From now on, we fix a representation $\sigma$ of $H$ with the following property: if $\sigma^\prime$ is another strongly continuous unitary representation of $H$ acting in a separable Hilbert space $\vv^\prime$, then $\sigma^\prime$ is a subrepresentation of $\sigma$.\footnote{Since $H$ is compact and separable, there exists a representation $\sigma$ having such property, and $\sigma$ is unique up to unitary equivalence. An explicit realization of $\sigma$ is obtained in this way: let $\kk$ be a separable infinite dimensional Hilbert space and $L^H$ the regular representation of $H$ acting in $L^2 (H,\mu_H)$. Then the representation $L^H \otimes I$ acting in $L^2 (H,\mu_H) \otimes \kk$ contains every irreducible representation of $H$ with infinite multiplicity and hence has the required property.} We define the following set associated to $U$ and $\sigma$
\begin{equation*}
\cc := \left\{ B\in \vv \otimes \hh \otimes \hh^\ast \mid \norm{B} = 1 \textrm{ and } B U_h = ( \sigma_h \otimes U_h ) B \textrm{ for all } h\in H \right\}
\end{equation*}

Suppose $U$ is square integrable. We denote by $\trhzero$ the following linear subspace of $\trh$ 
\begin{equation*}
\trhzero := \textrm{span}\left\{ u\odot v \mid u,v \in \dom C \right\}.
\end{equation*}
Since the domain of $C$ is dense in $\hh$, the set $\trhzero$ is dense in $\trh$.
For $B\in\cc$ and $T = \sum_{i = 1}^n v_i\odot u_i\in\trhzero$, $u_i, v_i \in\dom C$, we define
\begin{equation*}
\phi_B (T,g) := \Delta(g)^{-1} \sum_{i = 1}^n \trap{\vv}{ ( I_{\vv} \otimes U_g ) B U^\ast_g C v_i \odot ( I_{\vv} \otimes U_g ) B U^\ast_g C u_i } .
\end{equation*}
The map $g\mapsto \phi_B (T,g)$ is continuous from $G$ into $\trh$, and it is constant on the left $H$-cosets. Thus, it descends to a continuous function of $\Omega$ into $\trh$. Moreover, for all $\dot{g}$, the map $T\mapsto \phi_B (T,\dot{g})$ is linear and positive from $\trhzero$ into $\trh$.

We claim that $\phi_B (T,\cdot) \in L^1 (\Omega , \mu_{\Omega} ; \trh)$. In fact, if $T = \sum_{i = 1}^n v_i \odot v_i$, with $v_i \in \dom C$, is a positive element of $\trhzero$, we have
\begin{eqnarray*}
\int_{\Omega} \norm{\phi_B (T,\dot{g})}_\mathcal{T} \de\mu_{\Omega} (\dot{g})
& = & \int_{G} \trped{\hh}{\phi_B (T,g)} \de\mu_{G} (g) \\
& = & \sum_{i = 1}^n \int_{G} \Delta(g)^{-1} \scal{ B^\ast B U^\ast_g C v_i }{U^\ast_g C v_i } \de\mu_{G} (g).
\end{eqnarray*}
Let $B^\ast B = \sum\nolimits_j \lambda_j h_j \odot h_j$ be the spectral decomposition of $B^\ast B$, with $\sum_j \lambda_j = 1$. By Theorem \ref{teo:square} we get
\begin{eqnarray*}
\int_{\Omega} \norm{\phi_B (T,\dot{g})}_\mathcal{T} \de\mu_{\Omega} (\dot{g})
& = & \sum_{i = 1}^n \sum\nolimits_j \lambda_j \int_{G} \Delta (g)^{-1} \left| \scal{C v_i}{U_g h_i} \right|^2 \de\mu_{G} (g) \\
& = & \sum\nolimits_j \lambda_j \sum_{i = 1}^n \norm{v_i}^2 = \norm{T}_\mathcal{T} .
\end{eqnarray*}
If $T$ is generic element in $\trhzero$, decomposing it as $T = T^1_+ - T^1_- + i (T^2_+ - T^2_-)$, with $T^i_{\pm}$ positive elements in $\trhzero$ and $\norm{T^i_+}_{\ttt} + \norm{T^i_-}_{\ttt} \leq \norm{T}_{\ttt}$, we get by the above equation
\begin{equation}\label{una maggiorazione}
\int_{\Omega} \norm{\phi_B (T,\dot{g})}_\mathcal{T} \de\mu_{\Omega} (\dot{g}) \leq 2 \norm{T}_\mathcal{T} < \infty,
\end{equation}
and the claim is proved.

\begin{theorem}\label{Teo. sugli str. cov.}
Suppose $U$ is square integrable. If $B\in\cc$, there is a unique instrument $\ii^B : \trh \frecc \mm (\Omega ; \trh)$ such that for $T\in\trhzero$
\begin{equation}\label{def. di IB}
(\ii^B T) (X) = \int_X \phi_B (T,\dot{g}) \de\mu_{\Omega} (\dot{g}) \quad \forall X\in\bor{\Omega},
\end{equation}
the integral converging in the trace class norm.
The instrument $\ii^B$ is covariant with respect to $U$.

Conversely, if $\ii$ is an instrument based on $\Omega$ and covariant with respect to $U$, then $U$ is square integrable, and there exists $B\in\cc$ such that $\ii = \ii^B$.
\end{theorem}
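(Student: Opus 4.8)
The plan is to prove the two implications separately, the main tools being Theorem \ref{teo:struttura}, Corollary \ref{cor:square:2} and the Duflo--Moore Theorem \ref{teo:square}.

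\emph{Existence, uniqueness and covariance of $\ii^B$.} The key observation is that for $T = \sum_i v_i \odot u_i \in \trhzero$ one may write $\phi_B(T,g) = \trap{\vv}{K_g T K_g^\ast}$, where $K_g : \dom C \frecc \vv \otimes \hh$ is given by $K_g w = \Delta(g)^{-1/2}(I_\vv \otimes U_g) B U_g^\ast C w$. Since $T \mapsto K_g T K_g^\ast$ and the partial trace $\trap{\vv}{\cdot}$ are completely positive, each $T \mapsto \phi_B(T,\dot{g})$ is completely positive, and hence so is $\ii^B_X(T) = \int_X \phi_B(T,\dot{g})\,\de\mu_\Omega(\dot{g})$, a trace-norm convergent integral of completely positive maps. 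The bound \eqref{una maggiorazione} gives $\norm{\ii^B T}_\mm \leq 2\norm{T}_\ttt$ on $\trhzero$, so $\ii^B$ extends to a bounded map on $\trh$; as $\trhzero$ is dense and instruments are bounded (Proposition \ref{prop:bounded}), this extension is the only possible one, which also settles uniqueness. Normalization follows from the computation carried out before the theorem (which yields $\int_\Omega \tr{\phi_B(T,\dot{g})}\,\de\mu_\Omega = \norm{T}_\ttt$ for positive $T$) extended by linearity and continuity. Finally, covariance reduces, through the $G$-invariance of $\mu_\Omega$ and the substitution $\dot{g} \mapsto g_0 \cdot \dot{g}$, to the pointwise identity $\phi_B(T,g_0 g) = U_{g_0}\,\phi_B(U_{g_0}^\ast T U_{g_0},g)\,U_{g_0}^\ast$, which in turn is immediate from $C U_{g_0}^\ast = \Delta(g_0)^{-1/2}U_{g_0}^\ast C$ (item (1) of Theorem \ref{teo:square}), the multiplicativity of $\Delta$, and $U_{g_0}\trap{\vv}{\cdot}U_{g_0}^\ast = \trap{\vv}{(I_\vv \otimes U_{g_0})\cdot(I_\vv \otimes U_{g_0}^\ast)}$.

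\emph{The converse.} Starting from a $U$-covariant $\ii$, apply Theorem \ref{teo:struttura} to obtain $\kk$, $D$, $P$ and an isometry $W : \hh \frecc \kk \otimes \hh$ with $W U_g = (D_g \otimes U_g)W$ and $(\ii T)(X) = \trap{\kk}{(P(X)\otimes I)\, W T W^\ast}$. The pair $(D,P)$ is a system of imprimitivity for $G$ based on $\Omega = G/H$, so by Mackey's imprimitivity theorem it is equivalent, via a unitary $Z : \kk \frecc \hh^{\sigma^\prime}$, to the system $(L^{\sigma^\prime},P^{\sigma^\prime})$ induced by some strongly continuous representation $\sigma^\prime$ of $H$ on a separable $\vv^\prime$. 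Then $(D \otimes U, P \otimes I)$ is again a system of imprimitivity on $\kk \otimes \hh$, and the pointwise unitary $[\Theta f](g) = (I_{\vv^\prime}\otimes U_g^\ast)f(g)$ on $L^2(G;\vv^\prime \otimes \hh)$ conjugates $L \otimes I \otimes U$ into $L \otimes I$ while fixing the multiplication PVM; hence $\Xi := \Theta\,(Z \otimes I)$ is a unitary $\kk \otimes \hh \frecc \hh^\tau$ intertwining $(D \otimes U, P \otimes I)$ with $(L^\tau,P^\tau)$, where $\tau := \sigma^\prime \otimes (U|_H)$ acts on $\vv^\prime \otimes \hh$. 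Now $\Xi W : \hh \frecc \hh^\tau$ is an isometry intertwining $U$ with $L^\tau$, so Corollary \ref{cor:square:2} (with $\tau$ in the role of $\sigma$) shows $U$ is square integrable and produces $B^\pprime \in (\vv^\prime \otimes \hh)\otimes \hh^\ast$ with $\norm{B^\pprime}=1$, $B^\pprime U_h = (\sigma^\prime_h \otimes U_h)B^\pprime$ and $(\Xi W)v(g) = \Delta(g)^{-1/2}B^\pprime U_g^\ast C v$.

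By the universality of $\sigma$ one embeds $\vv^\prime$ into $\vv$ by an $H$-intertwiner $j$ and sets $B := (j \otimes I_\hh)B^\pprime \in \vv \otimes \hh \otimes \hh^\ast$; this still has norm $1$ and satisfies $B U_h = (\sigma_h \otimes U_h)B$, so $B \in \cc$. It remains to identify $\ii$ with $\ii^B$: translating the partial trace over $\kk$ through $\Xi$ by means of $\Xi(I_\kk \otimes A)\Xi^\ast \psi(g) = (I_{\vv^\prime}\otimes U_g^\ast A U_g)\psi(g)$, together with the fact that $\Xi$ carries $P(X)\otimes I$ to multiplication by $\chi_X(\dot{g})$, one computes for $u \in \dom C$ and $A \in \lh$ that $\tr{A\,(\ii(u\odot u))(X)}$ equals $\int_X \tr{A\,\phi_B(u\odot u,\dot{g})}\,\de\mu_\Omega(\dot{g})$, whence $\ii(u\odot u) = \ii^B(u\odot u)$; polarization, linearity and density of $\trhzero$ then give $\ii = \ii^B$.

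\emph{Main obstacle.} The delicate part is the converse, and in particular the correct bookkeeping of the two tensor factors. One must realize $(D \otimes U, P \otimes I)$ as the system induced by $\tau = \sigma^\prime \otimes U|_H$ rather than collapsing everything into the universal $\sigma$: only the multiplicity space $\vv^\prime$ of $(D,P)$ may be absorbed into $\vv$, whereas the extra copy of $\hh$ produced by the twisting $\Theta$ must be kept separate, since it is precisely the copy of $\hh$ occurring in $\cc \subseteq \vv \otimes \hh \otimes \hh^\ast$. The second technical point is the translation of the partial trace $\trap{\kk}{\cdot}$ into the induced picture, where the tensor structure $\kk \otimes \hh$ is no longer manifest; verifying the conjugation formula for $\Xi(I_\kk \otimes A)\Xi^\ast$ is what makes the final identification $\ii = \ii^B$ go through.
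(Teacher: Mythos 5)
Your proposal is correct, and its converse direction follows the paper's own route almost step for step: Theorem \ref{teo:struttura}, then Mackey's imprimitivity theorem, then the pointwise twist $[\Theta f](g)=(I\otimes U_g^\ast)f(g)$ (the paper's operator $V$), then Corollary \ref{cor:square:2} applied to the inducing representation $\sigma^\prime\otimes U|_H$. The one genuine variation there is bookkeeping: the paper absorbs $\sigma^\prime$ into the universal $\sigma$ at the level of the imprimitivity system, with the brief remark that ``it is not restrictive to assume $\sigma^\prime\equiv\sigma$'', whereas you keep $\sigma^\prime$ throughout and only at the end push $B^\pprime$ into $\vv\otimes\hh\otimes\hh^\ast$ via an isometric $H$-intertwiner $j:\vv^\prime\to\vv$; your version is a slightly more explicit justification of the same step, and your ``main obstacle'' paragraph correctly identifies the point (keeping the $\hh$-factor produced by the twist separate from the multiplicity space) that the paper's choice of $\sigma\otimes U|_H$ is designed to handle. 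The more substantive difference is in the direct implication: the paper never verifies the instrument axioms for $\ii^B$ by hand, but instead derives them from the correspondence $W\mapsto \trap{\kk}{(P(X)\otimes I)WTW^\ast}$, where complete positivity, normalization and covariance are automatic from the Stinespring-type form, and then identifies this instrument with formula (\ref{def. di IB}) by the trace computation. You instead verify the axioms directly from the integral formula: complete positivity via the Kraus-type factorization $\phi_B(T,g)=\trap{\vv}{K_gTK_g^\ast}$, normalization from the pre-theorem computation, and covariance by change of variables using $U_gC=\Delta(g)^{-1/2}CU_g$. This works (I checked the covariance identity $\phi_B(T,g_0g)=U_{g_0}\phi_B(U_{g_0}^\ast TU_{g_0},g)U_{g_0}^\ast$, it is correct), and it has the merit of making the covariance mechanism explicit; its cost is that the complete-positivity step silently uses that positive matrices with entries in $\trhzero$ are trace-norm dense in the positive cone of $\ti{\hh^{(n)}}$ when passing from $\trhzero$ to $\trh$ --- a standard but unstated density fact that the paper's route avoids entirely.
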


\begin{proof}
Convergence of the integral (\ref{def. di IB}) in the trace class norm follows from eq.~(\ref{una maggiorazione}).

If $\ii$ is an instrument based on $\Omega$ and covariant with respect to $U$, by Theorem \ref{teo:struttura} there exists an  imprimitivity system $(D,P,\kk)$ based on $\Omega$ and an isometry $W : \hh \frecc \kk\otimes\hh$ intertwining $U$ with $D\otimes U$ such that $\ii T (X) = \trap{\kk}{\lft P(X) \otimes I \rgt W T W^\ast}$. By imprimitivity theorem, $(D,P,\kk)$ is the imprimitivity system induced by some representation $\sigma^\prime$.
It is not restrictive to assume that $\sigma^\prime$ is the largest possible, i.~e.~$\sigma^\prime \equiv \sigma$, so that $(D,P,\kk) = (L^\sigma, P^\sigma, \hh^\sigma)$.

Conversely, if $W: \hh\frecc\hh^\sigma \otimes \hh$ is an isometry intertwining $U$ with $L^\sigma \otimes U$, then formula $\ii T (X) = \trap{\hh^\sigma}{\lft P^\sigma (X) \otimes I \rgt W T W^\ast}$ defines a $U$-covariant instrument $\ii$ based on $\Omega$. Therefore, the problem of characterizing the $U$-covariant instruments reduces to the task of finding the most general intertwining isometry $W: \hh\frecc\hh^\sigma \otimes \hh$. 

The unitary operator $V : \ldueGV \otimes \hh \frecc \ldueGV \otimes \hh$ given by
\begin{equation*}
V f (g) = (I_{\vv} \otimes U^\ast_g) f(g) \quad \forall f\in\ldueGVhh
\end{equation*}
intertwines $L\otimes I_{\vv} \otimes U$ with $L\otimes I_{\vv} \otimes I_{\hh}$, and $V (\hh^\sigma \otimes \hh) = \hh^{\sigma\otimes U|_H}$. Hence, $VW : \hh \frecc \hh^{\sigma\otimes U|_H}$ is an isometry intertwining $U$ with $L^{\sigma\otimes U|_H}$, and Corollary \ref{cor:square:2} applies. In particular, there exist $U$-covariant instruments based on $\Omega$ if and only if $U$ is square integrable. The most general form of $W$ is thus
\begin{equation*}
W v (g) = V^\ast V W v (g) = \Delta (g)^{-1/2} (I_{\vv} \otimes U_g) B U^\ast_g C v \quad \forall v\in\dom C ,
\end{equation*}
with $B\in\cc$.

With $W$ as above, if $T = \sum_{i = 1}^n v_i\odot u_i$ with $u_i , v_i \in \dom C$, and $v\in\hh$, we have
\begin{eqnarray*}
&& \scal{\ii_X (T) v}{v} = \trped{\hh}{(v\odot v) \trap{\hh^\sigma}{(P^\sigma(X) \otimes I_{\hh}) W T W^\ast}} \\
&& \quad = \trped{\hh^\sigma \otimes \hh}{(P^\sigma(X) \otimes (v\odot v)) W T W^\ast} \\
&& \quad = \sum_{i = 1}^n \scal{(P^\sigma(X) \otimes (v\odot v)) W v_i}{W u_i}_{\hh^\sigma \otimes \hh} \\
&& \quad = \int_X \Delta (g)^{-1} \sum_{i = 1}^n \scal{\trap{\vv}{(I_{\vv} \otimes U_g) B U^\ast_g C v_i \odot (I_{\vv} \otimes U_g) B U^\ast_g C u_i} v}{v}_{\hh} \de\mu_{\Omega} (\dot{g}),
\end{eqnarray*}
i.e.~$\ii T$ is given by
\begin{equation*}
\ii T (X) = \int_X \phi_B (T,\dot{g}) \de\mu_{G}(g) \quad \forall X\in\bor{G}.
\end{equation*}
Uniqueness and covariance of $\ii^B$ then follows as $\trhzero$ is dense in $\trh$ and  $\ii$ is continuous.
\end{proof}

If $H$ is the trivial one element subgroup of $G$, then Theorem \ref{Teo. sugli str. cov.} can be written in the following simplified form.

\begin{corollary}\label{cor:H=e}
There exist $U$-covariant instruments based on $G$ if and only if $U$ is square integrable. In this case, if $B\in \hh \otimes \hh^\ast$ has norm $1$, there exists a unique instrument $\jj^{B} : \trh \frecc \mm (G ; \trh)$ such that, for $T =  v\odot u$ with $u, v \in \dom C$,
\begin{equation}\label{eq:H=e}
(\jj^{B} T)(X) = \int_X \Delta (g)^{-1} U_g B U^\ast_g C v \odot U_g B U^\ast_g C u \ \de\mu_{G} (g) \quad \forall X\in\bor{G},
\end{equation}
the integral converging in the trace class norm. The instrument $\jj^{B}$ is $U$-covariant.

If $\{ \lambda_j \}_{j\in\nat}$ is a sequence of nonnegative real numbers summing up to $1$ and $\{ B_j \}_{j\in\nat}$ is a sequence of norm $1$ elements in $\hh\otimes\hh^\ast$, then the series
$\sum_{j\in\nat} \lambda_j \jj^{B_j}$ converges absolutely in $\elle{\trh ; \mm (G ; \trh)}$ to a $U$-covariant instrument, and every $U$-covariant instrument is of the form $\sum_{j\in\nat} \lambda_j \jj^{B_j}$ with $\lambda_j \geq 0$, $\sum_{j\in\nat} \lambda_j = 1$ and $\norm{B_j} = 1$.
\end{corollary}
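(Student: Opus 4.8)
The plan is to derive everything by specializing Theorem \ref{Teo. sugli str. cov.} to the trivial subgroup and then resolving the single covariance object $B\in\cc$ into a sequence of components along a basis of $\vv$.

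First I would specialize. With $H=\{e\}$ one has $\Omega=G$, $\dot g=g$, $\mu_\Omega=\mu_G$, and $\sigma$ is the largest representation of the one-element group, namely $\sigma_e=I_\vv$ on an infinite-dimensional separable $\vv$; the intertwining condition defining $\cc$ then becomes vacuous, so $\cc=\{B\in\vv\otimes\hh\otimes\hh^\ast\mid\norm B=1\}$. Theorem \ref{Teo. sugli str. cov.} immediately yields the first assertion (covariant instruments based on $G$ exist iff $U$ is square integrable) together with the fact that every $U$-covariant instrument equals $\ii^B$ for some $B\in\cc$. Taking instead $\vv=\complex$ in the same theorem shows that for a norm-$1$ element $B\in\hh\otimes\hh^\ast$ formula (\ref{eq:H=e}) defines a unique $U$-covariant instrument $\jj^B$, the integral converging by (\ref{una maggiorazione}).

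Next I would decompose a general $B\in\cc$. Fixing an orthonormal basis $\{e_j\}_{j\in\nat}$ of $\vv$, I write $B=\sum_j e_j\otimes B_j$ with $B_j\in\hh\otimes\hh^\ast$ the $j$-th component; orthonormality gives $\norm B^2=\sum_j\norm{B_j}^2=1$. Carrying out the partial trace over $\vv$ in the definition of $\phi_B(T,g)$, via $\trap{\vv}{(e_j\odot e_k)\otimes S}=\delta_{jk}S$, collapses the $\vv$-index and gives, for $T=\sum_i v_i\odot u_i\in\trhzero$,
\begin{equation*}
\phi_B(T,g)=\sum_j\Delta(g)^{-1}\sum_i U_gB_jU_g^\ast Cv_i\odot U_gB_jU_g^\ast Cu_i,
\end{equation*}
each $j$-summand being precisely the integrand of $\jj^{B_j}$ in (\ref{eq:H=e}), now read with the same formula for an arbitrary (not necessarily normalized) $B_j$. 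Integrating over $X$ and interchanging the sum over $j$ with the integral yields $(\ii^BT)(X)=\sum_j(\jj^{B_j}T)(X)$ on $\trhzero$. Setting $\lambda_j=\norm{B_j}^2$ and $\tilde B_j=\norm{B_j}^{-1}B_j$ when $B_j\neq0$, the quadratic dependence of the integrand on $B_j$ gives $\jj^{B_j}=\lambda_j\jj^{\tilde B_j}$, so $\ii^B=\sum_j\lambda_j\jj^{\tilde B_j}$ with $\lambda_j\geq0$, $\sum_j\lambda_j=1$, $\norm{\tilde B_j}=1$.

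For convergence and the converse, Proposition \ref{prop:bounded} gives $\norm{\jj^{\tilde B_j}}_\infty\leq2$, whence $\sum_j\lambda_j\norm{\jj^{\tilde B_j}}_\infty\leq2$ and the series converges absolutely in $\elle{\trh;\mm(G;\trh)}$; agreeing with the instrument $\ii^B$ on the dense subspace $\trhzero$ and both sides being bounded, the sum equals $\ii^B$ and is therefore a $U$-covariant instrument. Conversely, given $\lambda_j\geq0$ with $\sum_j\lambda_j=1$ and norm-$1$ elements $B_j\in\hh\otimes\hh^\ast$, I set $B:=\sum_j e_j\otimes\sqrt{\lambda_j}\,B_j$, a norm-$1$ element of $\cc$; the decomposition above then shows $\sum_j\lambda_j\jj^{B_j}=\ii^B$ is $U$-covariant, and since every covariant instrument is some $\ii^B$, the covariant instruments are exactly these series. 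The one step needing care is the interchange of the sum over $j$ with the trace-class-valued integral over $G$: I would establish it first for positive $T=\sum_i v_i\odot v_i$, where each summand $\Delta(g)^{-1}U_gB_jU_g^\ast Cv_i\odot U_gB_jU_g^\ast Cv_i$ is a positive operator and Tonelli's theorem applies, and then pass to general $T$ through the four-term decomposition already used for (\ref{una maggiorazione}).
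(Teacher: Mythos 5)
Your proposal is correct and follows essentially the same route as the paper's proof: specialize Theorem \ref{Teo. sugli str. cov.} to $H=\{e\}$, expand $B=\sum_j e_j\otimes B_j$ along an orthonormal basis of $\vv$, identify $\ii^B=\sum_j\norm{B_j}^2\,\jj^{B_j/\norm{B_j}}$ by interchanging the sum with the trace-class-valued integral, and obtain absolute convergence of the series from Proposition \ref{prop:bounded}. The only cosmetic differences are that the paper realizes $\jj^{B}$ as $\ii^{k\otimes B}$ for a unit vector $k\in\vv$ (rather than your ``take $\vv=\complex$'', which is not literally an instance of the theorem since $\cc$ there is defined with the fixed maximal $\sigma$ --- though your own converse step $B=\sum_j e_j\otimes\sqrt{\lambda_j}B_j$ supplies exactly this embedding), and it justifies the sum--integral interchange by dominated convergence via eq.~(\ref{*}) instead of your Tonelli-plus-four-term-decomposition argument; both are sound.
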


\begin{proof}
The first claim follows directly from Theorem \ref{Teo. sugli str. cov.}.

Suppose $U$ is square integrable. For $H = \{ e \}$, $\sigma$ is the trivial representation in an infinite dimensional Hilbert space $\vv$. If $k\in\vv$ and $B\in\hh\otimes \hh^\ast$ with $\norm{k} = \norm{B} = 1$, then $k\otimes B \in\cc$. By Theorem \ref{Teo. sugli str. cov.}, there is a unique $U$-covariant instrument $\ii^{k\otimes B}$ satisfying formula (\ref{def. di IB}). If $T = \sum_{i=1}^n v_i\odot u_i$ with $u_i, v_i \in \dom C$, we have
\begin{equation*}
(\ii^{k\otimes B} T) (X) = \int_X \Delta (g)^{-1} \sum_{i = 1}^n U_g B U^\ast_g C v_i \odot U_g B U^\ast_g C u_i \ \de\mu_{G}(g),
\end{equation*}
the integral converging in the trace class norm. By eq.~(\ref{una maggiorazione}) we have
\begin{equation}\label{*}
\int_X \Delta (g)^{-1} \norm{\sum_{i = 1}^n U_g B U^\ast_g C v_i \odot U_g B U^\ast_g C u_i}_\mathcal{T} \de\mu_{G}(g) \leq 2 \norm{T}_\mathcal{T}.
\end{equation}
We denote $\jj^B=\ii^{k\otimes B}$, and this is thus the instrument in (\ref{eq:H=e}).

We recall that $\norm{\jj^{B}}_\infty \leq 2$ by Proposition \ref{prop:bounded}. Thus, if $\lambda_j \geq 0$, $\sum_{j\in\nat} \lambda_j = 1$ and $\norm{B_j} = 1$, then the sum $\sum_{j\in\nat} \lambda_j \jj^{B_j}$ is absolutely convergent. Its limit is clearly a $U$-covariant instrument.

Conversely, if $\ii$ is $U$-covariant, then $\ii = \ii^B$ for some $B\in\cc$ by Theorem \ref{Teo. sugli str. cov.}. Let $\{ e_i \}$ be a Hilbert basis of $\vv$. Then, $B = \sum_j e_j \otimes B_j$, with $B_j \in \hh\otimes\hh^\ast$ and $\sum_j \norm{B_j}^2 = 1$. Formula~(\ref{def. di IB}) for $T = \sum_{i = 1}^n v_i \odot u_i$, with $u_i , v_i \in \dom C$, can be written as
\begin{equation*}
(\ii^{B} T)(X) = \int_X \Delta (g)^{-1} \sum_{i = 1}^n \sum\nolimits_j U_g B_j U^\ast_g C v_i \odot U_g B_j U^\ast_g C u_i \de\mu_{G} (g).
\end{equation*}
By eq.~(\ref{*}) and dominated convergence theorem we get
\begin{eqnarray*}
(\ii^{B} T)(X) & = & \sum\nolimits_j \int_X \Delta (g)^{-1} \sum_{i = 1}^n U_g B_j U^\ast_g C v_i \odot U_g B_j U^\ast_g C u_i \de\mu_{G} (g) \notag \\
& = & \sum\nolimits_j \norm{B_j}^2 (\jj^{B_j / \norm{B_j}} T) (X).
\end{eqnarray*}
Thus, $\ii^B = \sum\nolimits_j \norm{B_j}^2 \jj^{B_j / \norm{B_j}}$ as the set $\trhzero$ is dense in $\trh$.
\end{proof}

\section{Covariant instruments and projective representations}\label{sec:projective}

In this section we extend the previous results to the case in which $U$ is a projective unitary representation of $G$ in $\hh$. We recall that a projective unitary representation of $G$ in $\hh$ is a mapping $U : G\frecc \elle{\hh}$ such that
\begin{enumerate}
\item $U$ is a weakly Borel map;
\item $U(e) = I$;
\item there exists a mapping $m : G\times G \frecc \torus$ ($\torus$ being the group of complex numbers with modulus one) such that $U(g_1 , g_2) = m(g_1 , g_2) U(g_1) U(g_2)$. 
\end{enumerate}
The function $m$ is the {\em multiplier} of $U$. Also in this case, we will often use the abbreviated notation $U_g = U(g)$. 

For more details about projective representations we refer to~\cite{GQT85} and \cite{TSAQM04}. Here we recall that the set $G_m := G \times \torus$ endowed with the product law
$$
(g,z) (g^\prime , z^\prime) = (g g^\prime , z z^\prime m(g,g^\prime))
$$
is a group, and there exists a unique lcsc~topology on $G_m$ making it a topological group with $\torus$ being central closed subgroup and $G_m /\torus = G$. The group $G_m$ is called the {\em central extension} of $G$ associated to the multiplier $m$.

The projective representation $U$ extends to a strongly continuous unitary representation $\tilde{U}$ of $G_m$ by setting
\begin{equation}\label{def. di Utilde}
\tilde{U} (g,z) = z^{-1} U(g).
\end{equation}
Moreover, $U$ is irreducible if and only if $\tilde{U}$ is. The action of $G$ on $\Omega$ lifts to an action of $G_m$ on $\Omega$, with $\torus$ acting trivially.

Definition \ref{def:cov} of a covariant instrument clearly makes sense also in the case of projective representations. It is immediately checked that the instrument $\ii$ is covariant with respect to the projective representation $U$ of $G$ if and only if it is covariant with respect to the representation $\tilde{U}$ of $G_m$. Therefore, Theorem \ref{teo:struttura} is valid also in the case of projective representations. 

Suppose $U$ is an irreducible projective unitary representation of $G$. As for usual representations, we say that $U$ is square integrable if the mapping $g\mapsto \scal{v}{U_g v}$ is in $\ldueG$ for some nonzero $v$. Then $U$ is square integrable if and only if the representation $\tilde{U}$ of $G_m$ is square integrable in the usual sense. In fact, if $\mu_{\torus}$ is the normalized Haar measure of $\torus$, then $\de \mu_{G_m} (g,z) = \de \mu_G (g) \de \mu_{\torus} (z)$ is a Haar measure of $G_m = G\times \torus$, and
$$
\int_G |\scal{v}{U_g v}|^2 \de\mu_G (g) = \int_{G\times \torus} \left|\scal{v}{\tilde{U} (g,z) v} \right|^2 \de\mu_{G_m} (g,z).
$$
The formal degree of the projective representation $U$ with respect to the Haar measure $\mu_G$ is defined as the formal degree of $\tilde{U}$ with respect to the Haar measure $\mu_{G_m}$.

Let $H\subset G$ be a compact subgroup. As we did in the previous section, we let $\sigma$ acting in the Hilbert space $\vv$ be the maximal separable unitary representation of $H$, and we denote
\begin{equation}\label{eq:C_ancora}
\cc = \left\{ B\in \vv \otimes \hh \otimes \hh^\ast \mid \norm{B} = 1 \textrm{ and } B U_h = ( \sigma_h \otimes U_h ) B \textrm{ for all } h\in H \right\}.
\end{equation}
Theorem \ref{Teo. sugli str. cov.} holds also for projective representations. In fact

\begin{corollary}\label{cor: U irr. pr., H gen.}
Suppose $U$ is a square integrable projective unitary representation of $G$. If $B\in\cc$, there is a unique instrument $\ii^B : \trh \frecc \mm (\Omega ; \trh)$ such that for $T = \sum_{i=1}^n v_i \odot u_i$, $v_i , u_i \in \dom C$,
\begin{equation}\label{def. di IB proiett.}
(\ii^B T) (X) = \int_X \phi_B (T,\dot{g}) \de\mu_{\Omega} (\dot{g}) \quad \forall X\in\bor{\Omega},
\end{equation}
where
\begin{equation*}
\phi_B (T,\dot{g}) = \Delta(g)^{-1} \sum_{i = 1}^n \trap{\vv}{ ( I_{\vv} \otimes U_g ) B U^\ast_g C v_i \odot ( I_{\vv} \otimes U_g ) B U^\ast_g C u_i }
\end{equation*}
and the integral converges in the trace class norm.
The instrument $\ii^B$ is covariant with respect to $U$.

Conversely, if $\ii$ is an instrument based on $\Omega$ and covariant with respect to $U$, then $U$ is square integrable, and there exists $B\in\cc$ such that $\ii = \ii^B$.
\end{corollary}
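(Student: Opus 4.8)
The plan is to reduce the assertion to the genuine-representation case already settled in Theorem~\ref{Teo. sugli str. cov.}, applied to the central extension $G_m$ and the ordinary unitary representation $\tilde{U}$ of $G_m$ defined by eq.~\eqref{def. di Utilde}. As recalled above, $\ii$ is covariant with respect to the projective representation $U$ of $G$ if and only if it is covariant with respect to $\tilde{U}$; moreover $\tilde{U}$ is irreducible because $U$ is, and $U$ is square integrable precisely when $\tilde{U}$ is, with the same operator $C$ and formal degree. It therefore suffices to translate the output of Theorem~\ref{Teo. sugli str. cov.} for the pair $(G_m,\tilde{U})$ back into the data $(G,U)$.

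First I would identify the correct base-point stabiliser inside $G_m$. Writing $\pi:G_m\frecc G$ for the canonical projection, the preimage $H_m:=\pi^{-1}(H)=H\times\torus$ is a closed subgroup of $G_m$; it is compact, being the product of the compact groups $H$ and $\torus$, and $\Omega=G/H=G_m/H_m$ carries the same invariant measure $\mu_\Omega$ (the central factor $\torus$ acts trivially, and $\mu_{H_m}=\mu_H\times\mu_\torus$ is again normalised). Applying Theorem~\ref{Teo. sugli str. cov.} to $G_m$, $\tilde{U}$ and $H_m$ then shows that $U$-covariant instruments exist if and only if $U$ is square integrable, and that they are parametrised by the set $\cc_m$ obtained from eq.~\eqref{eq:C_ancora} with $G$, $H$, $\sigma$, $U$ replaced by $G_m$, $H_m$, $\tilde\sigma$, $\tilde{U}$, where $\tilde\sigma$ is the maximal separable representation of $H_m$ on a space $\vv_m$.

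The main point is to identify $\cc_m$ with the set $\cc$ of eq.~\eqref{eq:C_ancora}. I would decompose $\vv_m=\bigoplus_{k\in\Z}\vv_m^{(k)}$ into the sectors on which the central subgroup $\torus\subset H_m$ acts by the character $z\mapsto z^k$, and write $B=\sum_k B_k$ accordingly. An element $B\in\vv_m\otimes\hh\otimes\hh^\ast$ lies in $\cc_m$ when $B\,\tilde{U}(h,z)=(\tilde\sigma_{(h,z)}\otimes\tilde{U}(h,z))\,B$ for all $(h,z)\in H_m$. Taking $h=e$ and using $\tilde{U}(e,z)=z^{-1}I$ gives $B=(\tilde\sigma_{(e,z)}\otimes I)B=\sum_k z^k B_k$ for every $z\in\torus$, forcing $B_k=0$ for $k\neq 0$; hence $B\in\vv_m^{(0)}\otimes\hh\otimes\hh^\ast$. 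On the $\torus$-trivial sector $\tilde\sigma$ descends to an ordinary representation of $H_m/\torus=H$, and since the maximal representation of $H_m$ contains every $\torus$-trivial irreducible with infinite multiplicity, its restriction to $\vv_m^{(0)}$ is unitarily equivalent to the maximal representation $\sigma$ of $H$; thus $\vv_m^{(0)}\cong\vv$. Substituting $\tilde{U}(h,z)=z^{-1}U_h$ into the intertwining relation and cancelling the factor $z^{-1}$ reduces it to $B\,U_h=(\sigma_h\otimes U_h)B$ for all $h\in H$, so $\cc_m$ is precisely $\cc$.

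It remains to check that the instrument formula produced by Theorem~\ref{Teo. sugli str. cov.} for $\tilde{U}$ coincides with eq.~\eqref{def. di IB proiett.}. Since $\torus$ is compact, the modular function of $G_m=G\times\torus$ is $\tilde\Delta(g,z)=\Delta(g)$, and the operators $C$ agree. For $B\in\vv_m^{(0)}\otimes\hh\otimes\hh^\ast$ one has $(I_{\vv}\otimes\tilde{U}(g,z))\,B\,\tilde{U}(g,z)^\ast=(I_{\vv}\otimes z^{-1}U_g)\,B\,(z U_g^\ast)=(I_{\vv}\otimes U_g)\,B\,U_g^\ast$, so the phase $z$ drops out of $\phi_B$ entirely; together with $\tilde\Delta(g,z)^{-1}=\Delta(g)^{-1}$ and $G_m/H_m=\Omega$ this shows that the kernel $\phi_B$ for $\tilde{U}$ equals the $\phi_B$ displayed in the statement, with the integral taken over $\Omega$ against $\mu_\Omega$. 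Hence the instrument is exactly $\ii^B$ of eq.~\eqref{def. di IB proiett.}, its uniqueness and $U$-covariance being inherited from Theorem~\ref{Teo. sugli str. cov.}. I expect the only genuinely delicate step to be the identification $\cc_m\cong\cc$, that is, verifying that passing to the $\torus$-trivial sector of the maximal representation of $H_m$ recovers exactly the maximal representation $\sigma$ of $H$ entering \eqref{eq:C_ancora}; the cancellation of the multiplier phases in $\phi_B$ is then a short computation.
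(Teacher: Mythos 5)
Your proposal is correct and follows essentially the same route as the paper's own proof: pass to the central extension $G_m$ with $H_m = H\times\torus$, apply Theorem~\ref{Teo. sugli str. cov.} to $\tilde{U}$, decompose the maximal representation of $H_m$ into $\torus$-isotypic sectors, and show the intertwining condition forces $B$ into the $\torus$-trivial sector, which is identified with $(\sigma,\vv)$. You in fact spell out two points the paper leaves implicit (the justification that the $\torus$-trivial sector of the maximal $H_m$-representation is the maximal $H$-representation, and the cancellation of phases together with $\tilde\Delta(g,z)=\Delta(g)$ in the kernel $\phi_B$), which is fine and consistent with the paper's argument.
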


\begin{proof}
The subset $H_m = H \times \torus$ is a closed compact subgroup of $G_m$. The spaces $G/H$ and $G_m / H_m$ are clearly identified both under the action of $G$ and under the action of $G_m$ (with $\torus$ acting trivially).

Let $\tilde{\sigma}$ acting in $\tilde{\vv}$ be the maximal separable  unitary representation of $H_m$, and define
\begin{equation}\label{L'altra def. di cc}
\tilde{\cc} = \left\{ B\in \tilde{\vv} \otimes \hh \otimes \hh^\ast \mid \norm{B} = 1 \textrm{ and } B \tilde{U}_{\tilde{h}} = ( \tilde{\sigma}_{\tilde{h}} \otimes \tilde{U}_{\tilde{h}} ) B \textrm{ for all } \tilde{h}\in H_m \right\}.
\end{equation}

By Theorem \ref{Teo. sugli str. cov.} and the previous remarks, the statement of the above theorem is true with $\vv$ replaced by $\tilde{\vv}$ and $B\in \tilde{\cc}$. 
Decompose $\tilde{\vv} = \oplus_{n\in\Z} \vv_n$, with $\tilde{\sigma} (e,z) v = z^n v$ for all $v\in\vv_n$. Each $\vv_n$ is $\tilde{\sigma}$-invariant. If $B\in\tilde{\cc}$, then
\begin{equation*}
(\tilde{\sigma} (e,z) \otimes I) B = z (\tilde{\sigma} (e,z) \otimes \tilde{U} (e,z)) B = z B \tilde{U} (e,z) = B,
\end{equation*}
i.e.~$B\in \vv_0 \otimes \hh \otimes \hh^\ast$. Since the restriction of $\tilde{\sigma}$ to $\vv_0$ is naturally identified with $\sigma$, the claim of the theorem follows.
\end{proof}

Finally, we prove the projective version of Corollary \ref{cor:H=e}.

\begin{corollary}\label{cor:H=e:pr}
There exist $U$-covariant instruments based on $G$ if and only if $U$ is square integrable. In this case, if $B\in \hh \otimes \hh^\ast$ has norm $1$, there exists a unique instrument $\jj^{B} : \trh \frecc \mm (G ; \trh)$ such that, for $T =  v\odot u$ with $u, v \in \dom C$,
\begin{equation*}
(\jj^{B} T)(X) = \int_X \Delta (g)^{-1} U_g B U^\ast_g C v \odot U_g B U^\ast_g C u \ \de\mu_{G} (g) \quad \forall X\in\bor{G},
\end{equation*}
the integral converging in the trace class norm. The instrument $\jj^{B}$ is covariant with respect to $U$.

If $\{ \lambda_j \}_{j\in\nat}$ is a sequence of nonnegative real numbers summing up to $1$ and $\{ B_j \}_{j\in\nat}$ is a sequence of norm $1$ elements in $\hh\otimes\hh^\ast$, then the series
$\sum_{j\in\nat} \lambda_j \jj^{B_j}$ converges absolutely in $\elle{\trh ; \mm (G ; \trh)}$ to a $U$-covariant instrument, and every $U$-covariant instrument is of the form $\sum_{j\in\nat} \lambda_j \jj^{B_j}$ with $\lambda_j \geq 0$, $\sum_{j\in\nat} \lambda_j = 1$ and $\norm{B_j} = 1$.
\end{corollary}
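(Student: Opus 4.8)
The plan is to transcribe the proof of Corollary \ref{cor:H=e} almost verbatim, replacing every appeal to Theorem \ref{Teo. sugli str. cov.} by its projective analogue Corollary \ref{cor: U irr. pr., H gen.}. The first assertion---that $U$-covariant instruments based on $G$ exist if and only if $U$ is square integrable---is then immediate, since it is already part of the statement of Corollary \ref{cor: U irr. pr., H gen.} specialized to $H = \{e\}$ and $\Omega = G$.

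Next I would construct the instruments $\jj^B$. When $H = \{e\}$ the maximal separable representation $\sigma$ is the trivial representation on an infinite-dimensional space $\vv$, so the intertwining condition defining $\cc$ in (\ref{eq:C_ancora}) is vacuous and $\cc$ is simply the unit sphere of $\vv \otimes \hh \otimes \hh^\ast$. Hence, for any unit vector $k \in \vv$ and any $B \in \hh \otimes \hh^\ast$ with $\norm{B} = 1$, the element $k \otimes B$ lies in $\cc$, and Corollary \ref{cor: U irr. pr., H gen.} produces a unique $U$-covariant instrument $\ii^{k \otimes B}$ satisfying (\ref{def. di IB proiett.}). I would then define $\jj^B := \ii^{k \otimes B}$ and check that it has the asserted form: writing $(I_\vv \otimes U_g)(k \otimes B) U_g^\ast C v = k \otimes U_g B U_g^\ast C v$ and taking the partial trace over $\vv$ of the resulting rank-one operators, the factor $\norm{k}^2 = 1$ drops out and the integrand of (\ref{def. di IB proiett.}) collapses to $\Delta(g)^{-1} U_g B U_g^\ast C v \odot U_g B U_g^\ast C u$, as required. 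Trace-norm convergence of the integral and the uniform estimates (\ref{una maggiorazione}) and (\ref{*}) transfer to the present setting because the projective case is reduced in Corollary \ref{cor: U irr. pr., H gen.} to the genuine representation $\tilde U$ of $G_m$.

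Finally, I would establish the convex-decomposition statement exactly as in Corollary \ref{cor:H=e}. Absolute convergence of $\sum_{j} \lambda_j \jj^{B_j}$ follows from $\norm{\jj^{B_j}}_\infty \leq 2$ (Proposition \ref{prop:bounded}) together with $\sum_j \lambda_j = 1$, and the limit is visibly a $U$-covariant instrument. For the converse, given a $U$-covariant $\ii$, Corollary \ref{cor: U irr. pr., H gen.} yields $\ii = \ii^B$ for some $B \in \cc$; fixing a Hilbert basis $\{ e_j \}$ of $\vv$ I would expand $B = \sum_j e_j \otimes B_j$ with $B_j \in \hh \otimes \hh^\ast$ and $\sum_j \norm{B_j}^2 = 1$, then use the estimate (\ref{*}) and the dominated convergence theorem to interchange sum and integral in (\ref{def. di IB proiett.}), obtaining $\ii^B = \sum_j \norm{B_j}^2 \jj^{B_j / \norm{B_j}}$ on the dense subspace $\trhzero$ and hence everywhere. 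Setting $\lambda_j = \norm{B_j}^2$ gives the claimed form. I do not anticipate a genuine obstacle here: the entire argument is parallel to the non-projective Corollary \ref{cor:H=e}, and the only point needing care is confirming that the specialization $H = \{e\}$ of Corollary \ref{cor: U irr. pr., H gen.} really does return the trivial $\sigma$ and the full unit sphere as $\cc$---which is exactly what the reduction from $H_m$ to $H$ carried out in the proof of that corollary guarantees.
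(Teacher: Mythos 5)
Your proof is correct, but it is not the route the paper takes. You specialize Corollary \ref{cor: U irr. pr., H gen.} to $H=\{e\}$ (where $\sigma$ is the trivial representation on an infinite-dimensional $\vv$ and $\cc$ in (\ref{eq:C_ancora}) is the whole unit sphere of $\vv\otimes\hh\otimes\hh^\ast$) and then transcribe the proof of Corollary \ref{cor:H=e}: the collapse of $\phi_{k\otimes B}(T,\cdot)$ to $\Delta(g)^{-1}\sum_i U_gBU_g^\ast Cv_i\odot U_gBU_g^\ast Cu_i$, the absolute convergence from Proposition \ref{prop:bounded}, and the expansion $B=\sum_j e_j\otimes B_j$ with dominated convergence are all sound. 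The paper instead never invokes Corollary \ref{cor: U irr. pr., H gen.} here: it constructs an explicit bijection between $U$-covariant instruments on $G$ and $\tilde{U}$-covariant instruments on the central extension $G_m$ --- the lift $\ii\mapsto\tilde{\ii}$ obtained by integrating $\chi_{\tilde{X}}$ over $\torus$, and the pushforward $p_\ast\tilde{\ii}$ along $p(g,z)=g$ --- and then quotes the non-projective Corollary \ref{cor:H=e} for $\tilde{U}$ on $G_m$, pushing the formula down via the phase cancellation $\tilde{U}(g,z)B\tilde{U}(g,z)^\ast=U_gBU_g^\ast$ and the normalization of $\mu_\torus$. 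What each buys: your route is shorter and reuses the already-established projective structure theorem; the paper's route works entirely upstairs on $G_m$, where the estimates (\ref{una maggiorazione}) and (\ref{*}) hold verbatim, so it never has to argue that they ``transfer'' to the projective setting --- which is the one point you gloss over. Your dominated-convergence step needs the projective analogue of (\ref{una maggiorazione}), and you should justify it explicitly: the Duflo--Moore orthogonality relation for $\tilde{U}$ descends to $U$ because $\left|\scal{Cv}{\tilde{U}(g,z)u}\right|=\left|\scal{Cv}{U_gu}\right|$ and $\mu_{G_m}=\mu_G\times\mu_\torus$ with $\mu_\torus$ normalized, after which the derivations of (\ref{una maggiorazione}) and (\ref{*}) go through word for word. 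The paper's instrument-level correspondence $\ii\leftrightarrow\tilde{\ii}$ is also a reusable fact in its own right, which your approach does not produce.
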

\begin{proof}
If $\ii$ is an instrument based on $G$, for all $\tilde{X}\in \bor{G_m}$ and $T\in\trh$ define
\begin{equation*}
(\tilde{\ii} T)(\tilde{X}) = \int_G \left[ \int_{\torus} \chi_{\tilde{X}} (g,z) \de \mu_{\torus} (z) \right] \de (\ii T)(g).
\end{equation*}
It is easy to check that $\tilde{\ii}$ is an instrument based on $G_m$. $\tilde{\ii}$ is $\tilde{U}$-covariant if $\ii$ is covariant with respect to $U$.

On the other hand, defining $p:G_m \frecc G$, $p(g,z) = g$, we see that
\begin{equation*}
p_\ast : \mm (G_m ; \trh) \frecc \mm (G ; \trh), \quad p_\ast (M) (X) = M(p^{-1} (X))
\end{equation*}
is a continuous positive mapping, and, if $\tilde{\ii}$ is a $\tilde{U}$-covariant instrument based on $G_m$, then $p_\ast \tilde{\ii}$ is a $U$-covariant instrument based on $G$.

It can be easily checked that the mappings $\ii \mapsto \tilde{\ii}$ and $\tilde{\ii} \mapsto p_\ast \tilde{\ii}$ are one the inverse of the other when restricted to the set of $U$- and $\tilde{U}$-covariant instruments. The claim then follows by Corollary \ref{cor:H=e}, observing that
\begin{eqnarray*}
&& \int_{p^{-1} (X)} \Delta (g,z)^{-1} \sum_{i = 1}^n \tilde{U} (g,z) B \tilde{U} (g,z)^\ast C v_i \odot \tilde{U} (g,z) B \tilde{U} (g,z)^\ast C u_i \ \de\mu_{G_m} (g,z) \\
&& \qquad = \int_X \Delta (g)^{-1} \sum_{i = 1}^n U(g) B U^\ast (g) C v_i \odot U (g) B U^\ast (g) C u_i \ \de\mu_{G} (g).
\end{eqnarray*}
\end{proof}

\section{Covariant instruments and completely positive maps}\label{sec:cp}

In \cite{Davies70} and \cite[Sec. 4.5]{QTOS76} Davies derives a characterization for $U$-covariant instruments\footnote{In \cite{Davies70} and \cite[Sec. 4.5]{QTOS76} Davies does not require instruments to be completely positive, but he discuss this condition in \cite[Sec. 9.2.]{QTOS76}} in the case that $U$ is a finite dimensional unitary representation of a compact group $G$. His characterization is based on certain kind of positive linear maps on $\trh$. 

Assuming that $U$ is a square integrable projective unitary representation of $G$ and $H\subset G$ is a compact subgroup, we apply Corollary \ref{cor: U irr. pr., H gen.} in order to give an alternative description of the $U$-covariant instruments based on $\Omega=G/H$. This characterization is similar to that of Davies.

We denote by $\pp$ the convex set of maps $\Phi : \elle{\hh} \frecc \elle{\hh}$ such that
\begin{enumerate}
\item $\Phi$ is normal and completely positive;
\item $\Phi (I) \in \trh$ and $\tr{\Phi (I)} = 1$;
\item $\Phi (U_h A U_h^\ast ) = U_h \Phi (A) U_h^\ast$ for all $A\in\elle{\hh}$ and $h\in H$.
\end{enumerate}

\begin{lemma}\label{lemma:PhiB}
Let $\cc$ be the set defined in eq.~(\ref{eq:C_ancora}). For each $B\in\cc$, formula
\begin{equation}\label{eq:PhiB}
\Phi_B (A) = B^\ast (I_\vv \otimes A) B \quad \forall A\in\lh
\end{equation}
defines a map $\Phi_B \in\pp$. Conversely, if $\Phi\in\pp$, then there is $B\in\cc$ such that $\Phi = \Phi_B$. 
\end{lemma}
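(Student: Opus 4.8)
The plan is to read formula~(\ref{eq:PhiB}) as a Stinespring form. Under the identification of $\vv\otimes\hh\otimes\hh^\ast$ with the Hilbert-Schmidt operators in $\elle{\hh;\vv\otimes\hh}$, an element $B\in\cc$ is a Hilbert-Schmidt operator $B:\hh\frecc\vv\otimes\hh$ with $\norm{B}^2=\tr{B^\ast B}=1$ and $BU_h=(\sigma_h\otimes U_h)B$ for all $h\in H$. For the direct implication, since $A\mapsto I_\vv\otimes A$ is a normal unital $\ast$-representation of $\lh$ and $X\mapsto B^\ast XB$ is ultraweakly continuous and positive, the composition $\Phi_B(A)=B^\ast(I_\vv\otimes A)B$ is a normal completely positive map, being of Stinespring form $V^\ast\pi(\cdot)V$ with $V=B$ and $\pi(A)=I_\vv\otimes A$; this gives property (1). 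Property (2) is immediate from $\Phi_B(I)=B^\ast B\in\trh$ and $\tr{B^\ast B}=1$. For (3) I would first rewrite the defining relation as $(I_\vv\otimes U_h^\ast)B=(\sigma_h\otimes I)BU_h^\ast$, which follows from $BU_h=(\sigma_h\otimes U_h)B$ by applying $(\sigma_h\otimes U_h)^\ast$ and using unitarity; substituting this identity and its adjoint into $\Phi_B(U_hAU_h^\ast)=B^\ast(I_\vv\otimes U_h)(I_\vv\otimes A)(I_\vv\otimes U_h^\ast)B$ and using $\sigma_h^\ast\sigma_h=I_\vv$ collapses the expression to $U_h\Phi_B(A)U_h^\ast$.

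For the converse, given $\Phi\in\pp$ I would take the minimal Stinespring dilation of the normal completely positive map $\Phi$: a Hilbert space $\hh_\Phi$, a normal $\ast$-representation $\pi$ of $\lh$, and a bounded operator $V:\hh\frecc\hh_\Phi$ with $\Phi(A)=V^\ast\pi(A)V$ and $\spanno{\pi(A)Vv\mid A\in\lh,\,v\in\hh}$ total in $\hh_\Phi$; minimality forces $\pi(I)=I_{\hh_\Phi}$. Exactly as in step (E) of the proof of Theorem \ref{teo:struttura}, normality of $\pi$ together with Lemma 9.2.2 of \cite{QTOS76} lets me realize $\hh_\Phi=\kk\otimes\hh$ and $\pi(A)=I_\kk\otimes A$, with $\kk$ separable (separability of $\hh_\Phi$ following from that of $\hh$). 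Then $\Phi(A)=V^\ast(I_\kk\otimes A)V$, and condition (2) forces $\tr{V^\ast V}=\tr{\Phi(I)}=1$, so $V$ is Hilbert-Schmidt of norm $1$.

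The crux is to extract from the covariance condition (3) a unitary representation of $H$ on $\kk$ with respect to which $V$ intertwines. For fixed $h$ I set $V_h:=(I_\kk\otimes U_h^\ast)VU_h$; condition (3) gives $V_h^\ast(I_\kk\otimes A)V_h=V^\ast(I_\kk\otimes A)V=\Phi(A)$ for all $A$, and the reparametrization $(I_\kk\otimes A)V_hv=(I_\kk\otimes AU_h^\ast)V(U_hv)$ shows that $\spanno{(I_\kk\otimes A)V_hv}$ is total as well. Since $V$ and $V_h$ are thus both minimal Stinespring operators for $\Phi$ carrying the same representation $I_\kk\otimes A$, the densely defined assignment $(I_\kk\otimes A)Vv\mapsto(I_\kk\otimes A)V_hv$ is isometric and extends to a unitary of $\kk\otimes\hh$ commuting with every $I_\kk\otimes A$, hence of the form $D_h\otimes I$ for a unitary $D_h$ on $\kk$; this yields $VU_h=(D_h\otimes U_h)V$. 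Uniqueness of $D_h$ (again read off the total set) makes $h\mapsto D_h$ a homomorphism with $D_e=I$, and strong continuity of $D$ follows from that of $U$ evaluated on the total set, together with the uniform boundedness of the $D_h$. This is the step I expect to demand the most care, since it rests on the uniqueness of the minimal dilation and on the continuity verification.

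Finally, since $\kk$ is separable and $D$ is a strongly continuous unitary representation of $H$, the maximality of $\sigma$ provides an isometry $J:\kk\frecc\vv$ with $JD_h=\sigma_hJ$ for all $h\in H$. Putting $B:=(J\otimes I)V$ gives a Hilbert-Schmidt operator with $\norm{B}=\norm{V}=1$ and $BU_h=(J\otimes I)(D_h\otimes U_h)V=(\sigma_h\otimes U_h)(J\otimes I)V=(\sigma_h\otimes U_h)B$, so $B\in\cc$; and, using $J^\ast J=I_\kk$, one checks $\Phi_B(A)=V^\ast(J^\ast\otimes I)(I_\vv\otimes A)(J\otimes I)V=V^\ast(I_\kk\otimes A)V=\Phi(A)$. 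Hence $\Phi=\Phi_B$, which completes the proof.
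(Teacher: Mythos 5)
Your proposal is correct and follows essentially the same route as the paper: minimal Stinespring dilation realized as $\kk\otimes\hh$ with $\pi(A)=I_\kk\otimes A$, a unitary representation of $H$ on the dilation space extracted from the covariance condition (3), the observation that it commutes with all $I_\kk\otimes A$ and hence has the form $D_h\otimes I$, and finally an intertwining isometry $J$ into the maximal representation $\sigma$ with $B=(J\otimes I)V$. The only cosmetic difference is that the unitary you obtain from uniqueness of the minimal dilation applied to $V$ and $V_h=(I_\kk\otimes U_h^\ast)VU_h$ is exactly the operator $\tilde\sigma_h\lfq(I\otimes A)Vv\rgq=(I\otimes AU_h^\ast)VU_hv$ that the paper defines directly and checks to be isometric by the same computation.
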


\begin{proof}
It is clear that formula~(\ref{eq:PhiB}) defines an element $\Phi_B \in \pp$.

Conversely, suppose $\Phi\in\pp$. Since $\Phi$ is normal and completely positive, there is a Hilbert space $\vv_0$ and a bounded linear map $V : \hh\frecc\vv_0 \otimes \hh$ such that
\begin{equation*}
\Phi (A) = V^\ast (I \otimes A) V \quad \forall A\in\lh
\end{equation*}
and
\begin{equation*}
\vv_0 \otimes \hh = \overline{\rm span} \, \{(I \otimes A) V v \mid A\in\lh ,\, v\in\hh \}.
\end{equation*}
By condition (2), $V\in\vv_0 \otimes \hh\otimes\hh^\ast$, and $\norm{V} = 1$.
For $h\in H$, define the following linear operator $\tilde{\sigma}_h$ in ${\rm span} \, \{(I \otimes A) V v \mid A\in\lh ,\, v\in\hh \}$
\begin{equation*}
\tilde{\sigma}_h \left[ \sum \nolimits_{i = 1}^n (I\otimes A_i) V v_i \right] = \sum\nolimits_{i = 1}^n (I\otimes A_i U_h^\ast) V U_h v_i .
\end{equation*}
$\tilde{\sigma}_h$ is well defined and isometric, since
\begin{eqnarray*}
\norm{\sum\nolimits_{i = 1}^n (I\otimes A_i U_h^\ast) V U_h v_i}^2 & = & \sum\nolimits_{i,j = 1}^n \scal{\Phi (U_h A_j^\ast A_i U_h^\ast) U_h v_i}{U_h v_j} \\
& = & \sum\nolimits_{i,j = 1}^n \scal{\Phi (A_j^\ast A_i) v_i}{v_j} \\
& = & \norm{\sum\nolimits_{i = 1}^n (I\otimes A_i ) V v_i}^2 .
\end{eqnarray*}
So, $\tilde{\sigma}_h$ extends to an isometry in $\vv_0 \otimes \hh$. It is easy to check that $\tilde{\sigma}$ is a weakly (hence strongly) continuous unitary representation of $H$ in $\vv_0 \otimes \hh$. Since $\tilde{\sigma}_h (I\otimes A) = (I\otimes A) \tilde{\sigma}_h$ for all $A\in\lh$, $\tilde{\sigma}_h = \sigma^\prime_h \otimes I$ for some representation $\sigma^\prime$ of $H$ in $\vv_0$. Let $J :\vv_0 \frecc \vv$ be an isometry intertwining $\sigma^\prime$ with $\sigma$. Equation (\ref{eq:PhiB}) then follows by setting $B = (J\otimes I) V$.
\end{proof}

If $\phi : \trh\frecc\ttt(\hh)$ is a bounded linear map, we let $\phi^\ast : \elle{\hh} \frecc \elle{\hh}$ be its adjoint.

\begin{corollary}
There is a one-to-one convex mapping $\Phi\mapsto \ii^\Phi$ of $\pp$ onto the set of $U$-covariant instrument based on $\Omega$. If $\Phi\in\pp$, then $\ii^\Phi$ is defined by
\begin{equation*}
\scal{(\ii_X^{\Phi})^\ast (A) v}{u} = \int_X \Delta(g)^{-1} \scal{U_g \Phi (U_g^\ast A U_g) U_g^\ast C v}{C u} \de\mu_{\Omega} (\dot{g}) \quad \forall v,u\in\dom C
\end{equation*}
for all $A\in\elle{\hh}$ and $X\in\bor{\Omega}$.
\end{corollary}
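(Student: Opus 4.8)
The plan is to compose the two surjections already in hand. Lemma~\ref{lemma:PhiB} gives a surjection $B\mapsto\Phi_B$ from $\cc$ onto $\pp$, and Corollary~\ref{cor: U irr. pr., H gen.} gives a surjection $B\mapsto\ii^B$ from $\cc$ onto the set of $U$-covariant instruments based on $\Omega$. Given $\Phi\in\pp$, I would choose any $B\in\cc$ with $\Phi_B=\Phi$ (Lemma~\ref{lemma:PhiB}) and set $\ii^\Phi:=\ii^B$. Everything then hinges on one computation that rewrites the adjoint $(\ii^B_X)^\ast$ purely in terms of $\Phi_B$: this single step simultaneously establishes that $\ii^\Phi$ is well defined (independent of the choice of $B$), produces the stated formula, and provides the handle for injectivity.

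For that computation I would start from the expression for $\ii^B$ in Corollary~\ref{cor: U irr. pr., H gen.} and evaluate, for $A\in\elle{\hh}$ and $v,u\in\dom C$,
\begin{equation*}
\scal{(\ii^B_X)^\ast(A)v}{u}=\tr{A\,\ii^B_X(v\odot u)}=\int_X \Delta(g)^{-1}\tr{\lft I_\vv\otimes A\rgt\lfq(I_\vv\otimes U_g)BU_g^\ast Cv\rgq\odot\lfq(I_\vv\otimes U_g)BU_g^\ast Cu\rgq}\de\mu_\Omega(\dot g),
\end{equation*}
using the defining property of $\trap{\vv}{\cdot}$ to pass from $\tr{A\,\trap{\vv}{\cdot}}$ to $\tr{(I_\vv\otimes A)\,\cdot}$. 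Reading the rank-one trace as a scalar product, sliding the unitary $I_\vv\otimes U_g$ across, and inserting $\Phi_B(M)=B^\ast(I_\vv\otimes M)B$ with $M=U_g^\ast A U_g$, the integrand collapses to $\Delta(g)^{-1}\scal{U_g\Phi_B(U_g^\ast A U_g)U_g^\ast Cv}{Cu}$. This is exactly the asserted formula; since its right-hand side involves $B$ only through $\Phi_B$, the instrument $\ii^B$ depends on $B$ only through $\Phi_B$, so $\ii^\Phi$ is well defined.

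Surjectivity is then immediate: any $U$-covariant instrument equals $\ii^B$ for some $B\in\cc$ by Corollary~\ref{cor: U irr. pr., H gen.}, and taking $\Phi=\Phi_B\in\pp$ gives $\ii^\Phi=\ii^B$. Convexity (indeed affineness) of $\Phi\mapsto\ii^\Phi$ is clear because the displayed formula is linear in $\Phi$. For injectivity I would assume $\ii^{\Phi_1}=\ii^{\Phi_2}$; fixing $A,v,u$, the two integrands agree after integration over every $X\in\bor{\Omega}$, hence $\mu_\Omega$-almost everywhere. Here I would verify that $\dot g\mapsto\Delta(g)^{-1}\scal{U_g\Phi_i(U_g^\ast A U_g)U_g^\ast Cv}{Cu}$ is a genuine continuous function on $\Omega$: it is constant on left $H$-cosets since $\Delta\equiv 1$ on the compact group $H$ and $\Phi_i$ satisfies condition (3) defining $\pp$, and it is continuous by the same reasoning that showed $g\mapsto\phi_B(T,g)$ is continuous. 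As $\mu_\Omega$ has full support, a.e.\ agreement forces agreement everywhere; evaluating at the coset $\dot e$ of the identity gives $\scal{\Phi_1(A)Cv}{Cu}=\scal{\Phi_2(A)Cv}{Cu}$ for all $A$ and all $v,u\in\dom C$, whence $\Phi_1=\Phi_2$ by density of $\ran C$.

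The main obstacle I anticipate is bookkeeping rather than conceptual: carrying the partial-trace identity and the intertwining relations through the rank-one computation so that $\vv$ is correctly eliminated and $\Phi_B$ emerges with its argument conjugated as $U_g^\ast A U_g$ in the right order. The one genuinely substantive point in the injectivity step is the descent of the integrand to $\Omega$, which is precisely where the compactness of $H$ (through $\Delta|_H=1$) and the $H$-covariance condition defining $\pp$ are both needed.
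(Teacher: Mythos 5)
Your proposal is correct and follows essentially the same route as the paper: both rest on Lemma~\ref{lemma:PhiB} together with Corollary~\ref{cor: U irr. pr., H gen.}, with the central rank-one/partial-trace computation showing $\ii^{\Phi_B}=\ii^B$ (you run it from $\ii^B$ toward the formula, the paper runs it the other way), surjectivity read off from the corollary, and injectivity by recovering $\Phi$ at $g=e$ via density of $\ran\, C$. Your treatment of well-definedness and of the passage from equality of integrals to pointwise equality of the integrands (descent to $\Omega$ via $\Delta|_H=1$ and condition (3), continuity, full support of $\mu_\Omega$) merely makes explicit steps the paper leaves implicit.
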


\begin{proof}
By Lemma \ref{lemma:PhiB}, the elements in $\pp$ are all the maps of the form $\Phi_B$ for some $B\in\cc$. For $v,u\in\dom C$, we get
\begin{eqnarray*}
&&\scal{(\ii_X^{\Phi_B})^\ast (A) v}{u} =
\int_X \Delta(g)^{-1} \scal{ U_g B^\ast ( I \otimes U_g^\ast A U_g ) B U^\ast_g C v}{C u } \de\mu_{\Omega} (\dot{g}) \\
&& \qquad\quad = \int_X \Delta(g)^{-1} \trped{\vv\otimes\hh}{ ( I \otimes A U_g ) B U^\ast_g C v \odot ( I \otimes U_g ) B U^\ast_g C u } \de\mu_{\Omega} (\dot{g}) \\
&& \qquad\quad = \tr{A\, \ii_X^B (v\odot u)} = \scal{(\ii_X^B)^\ast (A) v}{u}.
\end{eqnarray*}
This means that $\ii^{\Phi_B} = \ii^B $. By Corollary \ref{cor: U irr. pr., H gen.}, the correspondence $\Phi \mapsto \ii^\Phi$ is onto. To show the injectivity of this correspondence, suppose $\Phi , \Psi \in\pp$ are such that
$$
\scal{(\ii^{\Phi}_X)^\ast (A) v}{u} = \scal{(\ii^{\Psi}_X)^\ast (A) v}{u}
$$
for all $A\in\lh$, $X\in\bor{\Omega}$ and $v,u \in\dom C$. Then,
$$
\scal{U_g \Phi (U_g^\ast A U_g) U_g^\ast C v}{C u} = \scal{U_g \Psi (U_g^\ast A U_g) U_g^\ast C v}{C u} \quad \forall g\in G , \, v,u\in\dom C ,
$$
so, by the density of ${\rm ran}\, C$, we get
$$
U_g \Phi (U_g^\ast A U_g) U_g^\ast = U_g \Psi (U_g^\ast A U_g) U_g^\ast \quad \forall g\in G .
$$
Taking $g=e$, we get $\Phi (A) = \Psi (A)$ for all $A\in\lh$ and hence, $\Phi = \Psi$.
\end{proof}

\section*{Acknowledgements}

T.H. acknowledges the support of the European Union project CONQUEST.


\end{document}